\newcommand{\qedClaim}{\hfill \ensuremath{\Box}}
\newenvironment{proofClaim}{\noindent {\bf Proof.}\ }{\qedClaim\par\vskip 4mm\par}
\newcommand\newcaptionstyle[2]{%
  \expandafter\ifx\csname caption@@#1\endcsname\relax
    \defcaptionstyle{#1}{#2}%
  \else
    \PackageError{caption}{Caption style `#1' already defined}{}%
  \fi}
\newcommand\defcaptionstyle[2]{%
  \@namedef{caption@@#1}{#2}}
\newcommand{\B}{\vspace*{-\smallskipamount}}
\newcommand{\BB}{\vspace*{-\medskipamount}}
\newcommand{\BBB}{\vspace*{-\bigskipamount}}
\newcommand{\FFF}{\vspace*{\bigskipamount}}
\newcommand{\BigO}[0]{{\cal O}\xspace}
\newcommand{\op}[1]{{\textsf{#1}}}
\newcommand{\remove}[1]{}
\newcommand{\Section}[1]{\section{#1}}
\newcommand{\Subsection}[1]{\noindent {\bf #1}~~~}
\begin{document}
%\initfloatingfigs
%\mainmatter
%\linenumbers

\renewcommand{\thefootnote}{\fnsymbol{footnote}}

\title{Self-Stabilizing Byzantine Resilient\\ Topology Discovery and Message Delivery}
\author{Shlomi Dolev~\thanks{Partially supported by
Deutsche Telekom, Rita Altura Trust Chair in Computer Sciences,
Lynne and William Frankel Center for Computer Sciences, Israel
Science Foundation (grant number 428/11) and Cabarnit Cyber Security MAGNET Consortium.}
\and Omri Liba~$^\ast$ \and Elad M.\ Schiller~\thanks{Partially supported by the EC, through project FP7-STREP-288195, KARYON (Kernel-based ARchitecture for safetY-critical cONtrol) and the European Union Seventh Framework Programme (FP7/2007-2013) under grant agreement No. 257007.}}

\institute{Department of Computer Science,
Ben-Gurion University of the Negev,
Beer-Sheva, Israel.  \email{ \{dolev, liba\}@cs.bgu.ac.il } \and Department of Computer Science and Engineering, Chalmers University of Technology, Goeteborg, Sweden. \email{elad@chalmers.se}}

\date{}
\maketitle

\begin{abstract}
Traditional Byzantine resilient algorithms use $2f + 1$ vertex-disjoint paths to ensure message delivery in the presence of up to $f$ Byzantine nodes. The question of how these paths are identified is related to the fundamental problem of topology discovery.
Distributed algorithms for topology discovery cope with a never ending task: dealing with frequent changes in the network topology and unpredictable transient faults. Therefore, algorithms for topology discovery should be self-stabilizing to ensure convergence of the topology information following any such unpredictable sequence of events. We present the first such algorithm that can cope with Byzantine nodes. Starting in an arbitrary global state, and in the presence of $f$ Byzantine nodes, each node is eventually aware of all the other non-Byzantine nodes and their connecting communication links.
Using the topology information, nodes can, for example, route messages across the network and deliver messages from one end user to another. We present the first deterministic, cryptographic-assumptions-free, self-stabilizing, Byzantine-resilient algorithms for network topology discovery and end-to-end message delivery. We also consider the task of $r$-neighborhood discovery for the case in which $r$ and the degree of nodes are bounded by constants. The use of $r$-neighborhood discovery facilitates polynomial time, communication and space solutions for the above tasks.
The obtained algorithms can be used to authenticate parties, in particular during the establishment of private secrets, thus forming public key schemes that are resistant to man-in-the-middle attacks of the compromised Byzantine nodes. A polynomial and efficient end-to-end algorithm that is based on the established private secrets can be employed in between periodical secret re-establishments.
\end{abstract}

\renewcommand{\thefootnote}{\arabic{footnote}}

\Section{Introduction}
Self-stabilizing Byzantine resilient topology discovery is a fundamental
distributed task that enables communication among parties in the network
even if some of the components are compromised
by an adversary. Currently, such topology discovery is becoming extremely
important  where countries'
main infrastructures, such as the electrical smart-grid, water
supply networks and intelligent transportation systems
are subject to cyber-attacks.
Self-stabilizing Byzantine resilient algorithms naturally
cope with mobile attacks \citep[e.g.,][]{DBLP:conf/podc/OstrovskyY91}.
Whenever the set of compromised components is fixed (or dynamic, but small)
during a period that suffices for convergence of the algorithm, the system
starts demonstrating useful behavior following the convergence.
For example, consider the case in which nodes of the smart-grid are
constantly compromised by an adversary while local recovery
techniques, such as local node reset and/or refresh, ensure
the recovery of a compromised node after a bounded time.
Once the current compromised set does not imply a partition of
the communication graph, the distributed control of the smart
grid automatically recovers.
Self-stabilizing Byzantine resilient algorithms for topology discovery
and message delivery are important for systems that have to cope
with unanticipated transient violations of the
assumptions that the algorithms are based upon,
such as unanticipated violation of the upper number of compromised nodes
and unanticipated transmission interferences that is beyond the error
correction code capabilities.

The dynamic and difficult-to-predict nature of electrical
smart-grid and intelligent transportation systems give rise to
many fault-tolerance issues and require efficient solutions.
Such networks are subject to transient faults due to
hardware/software temporal malfunctions or short-lived
violations of the assumed settings for the location and
state of their nodes. Fault-tolerant systems that
are {\em self-stabilizing}~\cite{D2K}
can recover after the occurrence of transient faults, which can
drive the system to an arbitrary system state. The system
designers consider {\em all} configurations as possible
configurations from which the system is started.
The self-stabilization design criteria liberate the system designer
from dealing with specific fault scenarios, risking neglecting
some scenarios, and having to address each fault scenario separately.

We also consider Byzantine faults that address the possibility of
a node to be compromised by an adversary and/or to run a corrupted
program, rather than merely assuming that they start in an arbitrary local state.
Byzantine components may behave arbitrarily (selfishly, or even maliciously) as message senders
and as relaying nodes. E.g., Byzantine nodes may block messages, selectively omit messages,
redirect message routes, playback messages, or modify messages. Any system behavior is possible, when all (or one third or more of) the nodes are Byzantine nodes. Thus, the number of Byzantine nodes, $f$, is usually restricted to be less than one third of the nodes~\cite{L96,D2K}.

The task of $r$-{\em neighborhood network discovery} allows each node
to know the set of nodes that are at most $r$ hops away from it in
the communication network. Moreover, the task provides information
about the communication links attached to these nodes.
The task {\em topology discovery} considers knowledge regarding
the node's entire connected component.
The $r$-neighborhood network discovery and network topology discovery tasks
are identical when $r$ is the communication graph radius.

This work presents the first deterministic self-stabilizing algorithms
for $r$-neighborhood discovery in the presence of Byzantine nodes.
We assume that every $r$-neighborhood cannot be partitioned by the
Byzantine nodes. In particular, we assume the existence of at least $2f+1$ vertex-disjoint
paths in the $r$-neighborhood, between any two non-Byzantine nodes,
where at most $f$ Byzantine nodes are present in the $r$-neighborhood, rather than in the entire network.~\footnote{Section~\ref{s:e2e} considers cases in which $r$ and an upper bound on the node degree, $\Delta$, are constants. For these cases, we have $\BigO(n)$ disjoint $r$-neighborhoods. Each of these (disjoint) $r$-neighborhoods may have up to $f$ Byzantine nodes, and yet the above assumptions about at least $2f+1$ vertex-disjoint paths in the $r$-neighborhood, hold.}
Note that by the self-stabilizing nature of our algorithms, recovery is guaranteed
after a temporal violation of the above assumption.
When $r$ is defined to be the communication graph radius, our assumptions are
equivalent to the standard assumption for Byzantine agreement in general (rather than only complete)
communication graphs. In particular the standard assumption is that $2f+1$ vertex
disjoint paths exist and {\it are known} (see e.g., \cite{L96}) while we present
distributed algorithms to find these paths starting in an arbitrary state.

\Subsection{Related work.}
Self-stabilizing algorithms for finding vertex-disjoint paths
for at most two paths between any pair of nodes,
and for all vertex-disjoint paths in anonymous mesh networks appear in~\cite{DBLP:conf/sss/Al-AzemiK11}
and in~\cite{DBLP:journals/comcom/HadidK09}, respectively. We propose self-stabilizing Byzantine resilient procedures for finding $f+1$ vertex-disjoint paths in $2f+1$-connected graphs.
In~\cite{DBLP:conf/wdag/DuboisMT11}, the authors study the problem of spanning tree
construction in the presence of Byzantine nodes.
Nesterenko and Tixeuil~\cite{DBLP:journals/tpds/NesterenkoT09} presented a
{\em non-stabilizing} and inconsistent algorithm for topology discovery in the presence of Byzantine nodes -- see the paper's errata for further details about the algorithm's flaws.~\footnote{~\url{http://vega.cs.kent.edu/~mikhail/Research/topology.errata.html}}
%In addition the algorithm flaws~\footnote{~\url{http://vega.cs.kent.edu/~mikhail/Research/topology.errata.html}}, the authors do not consider the automatic recovery implied by the self-stabilization property.
%
Awerbuch and Sipser~\cite{DBLP:conf/focs/AwerbuchS88} consider algorithms that were designed for synchronous static network and give topology update as an example. They show how to use such algorithms in asynchronous dynamic networks. Unfortunately, their scheme starts from a consistent state and cannot cope with transient faults or Byzantine nodes.

The problems of {\em Byzantine gossip}~\citep{DBLP:journals/dc/MinskyS03,DBLP:conf/osdi/LiCWNRAD06,DBLP:conf/wdag/DolevGGN07,DBLP:journals/sigops/AlvisiDGKLRT07,DBLP:journals/adhoc/BurmesterLY07,DBLP:conf/spaa/FernandessM08}
and {\em Byzantine Broadcast}~\citep{DBLP:conf/dsn/DrabkinFS05,DBLP:journals/ijfcs/PaquetteP06}
consider the dissemination of information in the presence of Byzantine nodes rather than self-stabilizing topology discovery. Non-self-stabilizing Byzantine resilient gossip in the presence of one selfish node is considered in~\citep{DBLP:conf/osdi/LiCWNRAD06,DBLP:journals/sigops/AlvisiDGKLRT07}. In~\citep{DBLP:conf/wdag/DolevGGN07} the authors study oblivious deterministic gossip algorithms for multi-channel radio networks with a malicious adversary. They assume that the adversary can disrupt one channel per round, preventing communication on that channel. In~\citep{DBLP:journals/adhoc/BurmesterLY07} the authors consider probabilistic gossip mechanisms for reducing the redundant transmissions of flooding algorithms. They present several protocols that exploit local connectivity to adaptively correct propagation failures and protect against Byzantine attacks.
Probabilistic gossip mechanisms in the context of recommendations and social networks are considered in~\citep{DBLP:conf/spaa/FernandessM08}. In~\citep{DBLP:journals/dc/MinskyS03} the authors consider rules
for avoiding a
combinatorial explosion in (non-self-stabilizing) gossip protocol.
Note that deterministic and self-stabilizing solutions are not presented
in~\citep{DBLP:journals/dc/MinskyS03,DBLP:conf/osdi/LiCWNRAD06,DBLP:conf/wdag/DolevGGN07,DBLP:journals/sigops/AlvisiDGKLRT07,DBLP:journals/adhoc/BurmesterLY07,DBLP:conf/spaa/FernandessM08}.
%
%Brahms is a probabilistic non-self-stabilizing algorithm for uniform node sampling inlarge dynamic systems prone to Byzantine behavior~\citep{DBLP:journals/cn/BortnikovGKKS09}.
Drabkin et al.~\citep{DBLP:conf/dsn/DrabkinFS05} consider non-self-stabilizing broadcast
protocols that overcome Byzantine failures by using digital signatures, message
signature gossiping, and failure detectors. Our deterministic self-stabilizing algorithm merely
use the topological properties of the communication graph to ensure correct message delivery to the application layer in the presence of message omission, modifications and Byzantine nodes. A non-self-stabilizing broadcasting algorithm is considered in~\citep{DBLP:journals/ijfcs/PaquetteP06}. The authors assume the restricted case in which links and nodes of a communication network are subject to Byzantine failures, and that faults are distributed randomly and independently.

%Update propagation in time logarithmic in the number of replicas and linear in the number of corrupt replicas is considered in~\citep{DBLP:conf/wdag/MalkhiPS01}.

\Subsection{Our contribution.}
We present two cryptographic-assumptions-free yet secure algorithms that are deterministic,
self-stabilizing and Byzantine resilient.

We start by showing the existence of deterministic, self-stabilizing, Byzantine resilient algorithms
for network topology discovery and end-to-end message delivery.
The algorithms convergence time is in $\BigO(n)$.
They take in to account every possible path and requiring bounded (yet exponential)
memory and bounded (yet exponential) communication costs.
Therefore, we also consider the task of $r$-neighborhood discovery, where $r$ is a constant.
We assume that if the $r$-neighborhood of a node has $f$ Byzantine nodes, there are $2f+1$
vertex independent paths between the node and any non-Byzantine node in its $r$-neighborhood.
The obtained $r$-neighborhood discovery algorithm requires polynomial memory and communication costs and supports
deterministic, self-stabilizing, Byzantine-resilient algorithm for end-to-end message delivery
across the network. Unlike topology update, the proposed end-to-end message delivery algorithm establishes message exchange synchronization between end-users that is based on message reception acknowledgments.

%\Subsection{Document structure.}
%%
%Settings and requirements appear in Section~\ref{s:pre}.
%The self-stabilizing Byzantine resilient distributed algorithm for topology discovery is presented in
%Section~\ref{s:SSBRTD}. The end-to-end communication algorithm appears in Section~\ref{s:e2e}.
%Extensions and concluding remarks appear in Section~\ref{s:ext}.
Detailed proofs appear in the Appendix.

%\pagebreak

\Section{Preliminaries}
\label{s:pre}

We consider settings of a standard asynchronous system \citep[cf.][]{D2K}.
The system consists of a set, $N = \{ p_i \}$, of communicating entities, chosen from a set, $P$,
which we call {\em nodes}.
The upper bound on the number of nodes in the system is $n=|P|$.
Each node has a unique identifier. Sometime we refer to a set, $P \setminus N$, of nonexisting nodes that a false indication on their existence can be recorded in the system. A node $p_i$ can directly communicate with its {\em neighbors}, $N_i \subseteq N$. The system can be represented by an undirected network of directly communicating nodes, $G=(N,E)$, named the {\em communication graph}, where $E= \{ (p_i, p_j) \in N \times N : p_j \in N_i \}$. We denote $N_k$'s set of indices by $indices(N_k) = \{ m : p_m \in N_k\}$ and $N_k$'s set of edges by $edges(N_j) = \{ p_j \} \times N_j$.

The $r$-neighborhood of a node $p_i \in N$ is the connected component that includes $p_i$ and all nodes that can be reached from $p_i$ by a path of length $r$ or less.
The problem of $r$-neighborhood topology discovery
considers communication graphs in which $p_i$'s degree, $\delta_i$, is bounded by a constant $\Delta$.
Hence, when both the neighborhood radius, $r$, and the node degree, $\Delta$, are constants the number
of nodes in the $r$-neighborhood is also bounded by a constant, namely by $\BigO(\Delta^{r+1})$.

We model the communication channel, $queue_{i,j}$, from node $p_i$ to
node $p_j \in N_i$ as a FIFO queuing list of the messages
that $p_i$ has sent to $p_j$ and $p_j$ is about to receive. When $p_i$
sends message $m$, the operation \op{send} inserts a copy of
$m$ to the queue $queue_{i,j}$ of the one destination $p_j$, such that $p_j \in N_i$. We assume that the number of messages
in transit, i.e., stored in $queue_{i,j}$, is at most $capacity$.
Once $m$ arrives, $p_j$ executes \op{receive} and $m$ is dequeued.

We assume that $p_i$ is completely aware of $N_i$, as in~\citep{DBLP:journals/tpds/NesterenkoT09}.
In particular, we assume that the identity of the sending node is known to the receiving one.
In the context of the studied problem, we say that node $p_i \in N$ is {\em correct} if it reports
on its genuine neighborhood, $N_i$. A {\em Byzantine} node, $p_{b} \in N$, is a node
that can send arbitrarily corrupted messages.
Byzantine nodes can introduce new messages and modify or omit messages that pass through them.
This way they can, e.g., disinform correct nodes about their neighborhoods, about the neighborhood of other correct nodes, or the path through which messages travel,
to name a very few specific misleading actions that Byzantine nodes may exhibit.
Note that our assumptions do not restrict system settings in which a {\em duplicitous Byzantine} node, $p_b$, reports about $N_b$ differently to its correct neighbors. In particular, $p_b$ can have $\{N_{b_1}, \ldots N_{b_{\delta_b}}\}$ reports, such that $p_b$'s identity in $N_{b_i}$ is different than the one in $N_{b_j}$, where $\delta_x$ is the degree of node $p_x$. One may use a set of non-duplicitous Byzantine nodes, $\{p_{b_1}, \ldots p_{b_\delta}\}$, to model each of $p_b$'s reports. Thus, for a $2k+1$ connected graph, the system tolerates no more than $\lfloor k/\Delta \rfloor$ duplicitous Byzantine nodes, where $\Delta$ is an upper bound on the node degree.

%Note that, in a system setting where this assumption does not hold, there is a need for a tighter bound on the number of Byzantine nodes. Namely, if the graph is $2k+1$ connected than we can tolerate no more than $\lfloor k/\delta \rfloor$ Byzantine nodes where $\delta$ denotes the maximum neighborhood size.
%

We denote $C$ and $B$ to be, respectively, the set of correct and Byzantine nodes.
We assume that $|B|= f$, the identity of $B$'s nodes is unknown to the ones in $C$, and $B$ is fixed throughout the considered execution segment. These execution segments are long enough for convergence and then for obtaining sufficient useful work. We assume that between any pair of correct nodes there are at least $2f+1$ vertex-disjoints paths.
We denote by $G_c=(C, E \cap C \times C)$ the {\em correct graph} induced by the set of correct nodes.

%%%%%%%%%%%%%%%%%%%%%%%%%%%%%%%%%%%%%%%%%%%%%%%%%%%%%%%%%%%%

Self-stabilizing algorithms never terminate~\cite{D2K}.
The non-termination property can be easily identified in the code of a self-stabilizing algorithm:
the code is usually a do forever loop that contains communication operations with the neighbors.
An iteration is said to be complete if it starts in the loop's first line and ends at
the last (regardless of whether it enters branches).
%A node may executes other parts of the program, or other programs, and periodically activates the loop.

Every node, $p_i$, executes a program that is a sequence of {\em
(atomic) steps}. For ease of description, we assume the interleaving
model with atomic step execution; a single step at any
given time. An input event can either be the receipt of a
message or a periodic timer going off triggering $p_i$ to \op{send}.
Note that the system is totally asynchronous and
the (non-fixed) node processing rates are irrelevant
to the correctness proof.

%A timer going off can be a trigger for a local
%a spontaneous action, such as a spontaneous \op{send} operation.

The {\em state} $s_i$ of a node $p_i$ consists of the value of
all the variables of the node (including the set of all incoming
communication channels, $\{ queue_{j,i} | p_j \in N_i \}$. The
execution of a step in the algorithm can change the state of a
node.
The term {\em (system) configuration} is used for a tuple of the form
$(s_1,s_2,\cdots,s_n)$, where each $s_i$ is the state of node
$p_i$ (including messages in transit for $p_i$).
We define an {\em execution} $E={c[0],a[0],c[1],a[1],\ldots}$ as an
alternating sequence of system configurations $c[x]$ and steps $a[x]$,
such that each configuration $c[x+1]$ (except the initial
configuration $c[0]$) is obtained from the preceding configuration
$c[x]$ by the execution of the step $a[x]$.
We often associate the notation of a step with its executing
node $p_i$ using a subscript, e.g., $a_i$.
An execution $R$ (run) is {\em fair} if every correct node, $p_i \in C$,
executes a step infinitely often in $R$.
Time (e.g. needed for convergence) is measured by the number of
{\em asynchronous rounds}, where the first asynchronous round is
the minimal prefix of the execution in which every node takes at
least one step. The second asynchronous round is the first asynchronous round
in the suffix of the run that follows the first asynchronous round, and so on.
The message complexity (e.g. needed for convergence) is the number of messages
measured in the specific case of synchronous execution.

We define the system's task by a set of executions called \emph{legal
executions} ($LE$) in which the task's requirements hold. A
configuration $c$ is a \emph{safe configuration} for an algorithm and
the task of $LE$ provided that any execution that starts in $c$ is a legal
execution (belongs to $LE$). An algorithm is \emph{self-stabilizing}
with relation to the task $LE$ when every infinite execution of the
algorithm reaches a safe configuration with relation to the algorithm
and the task.

\Section{Topology Discovery} %Self-Stabilizing Byzantine Resilient
\label{s:SSBRTD}
The algorithm learns about the neighborhoods that the nodes report. Each report message contains an ordered list of nodes it passed so far, starting in a source node. These lists are used for verifying that the reports are sent over $f+1$ vertex-disjoint paths.

%Each node $p_i$ periodically sends a message to each neighbor. The message sent contains the local topology, a source $i$ and an empty path.
When a report message, $m$, arrives to $p_i$, it inserts $m$ to the queue $informedTopology_i$, and tests the queue consistency until there is enough independent evidence to support the report. %The result array is initialized just prior to the consistency test.
The consistency test of $p_i$ iterates over each node $p_k$ such that, $p_k$ appears in at least one of the messages stored in $informedTopology_i$.
For each such node $p_k$, node $p_i$ checks whether there are at least $f+1$ messages from the same source node that have mutually vertex-disjoint paths and report on the same neighborhood. The neighborhood of each such $p_k$, that has at least $f+1$ vertex-disjoint paths with identical neighborhood, is stored in the array $Result_i[k]$ and the total number of paths that relayed this neighborhood is kept in $Count[k]$.

We note that there may still be nodes $p_{fake} \in P \setminus (N)$, for which there is an entry $Result[fake]$.
For example, $informedTopology$ may contain $f$ messages, all originated from different Byzantine nodes, and a message $m^{\prime}$ that appears in the initial configuration and supports the (false) neighborhood
the Byzantine messages refer to. These $f+1$ messages can contain mutually vertex-disjoint paths, and thus during the consistency test, a result will be found for $Result[fake]$. We show that during the next computations, the message $m^{\prime}$ will be identified and ignored.
The $Result$ array should include two reports for each (undirected) edge; the two nodes that are attached to the edge, each send a report. Hence, $Result$ includes a set of directed (report) edges. The term {\em contradicting edge} is needed when examining the $Result$ set consistency.

\BB\begin{definition}[Contradicting edges]
\label{d:ContradictingEdge}
Given two nodes, $p_i, p_j \in P$, we say that the edge $(p_i, p_j)$ is {\em contradicting with the set $evidence \subseteq edges(N_j)$}, if $(p_i, p_j) \not \in evidence$.
\BB\B \end{definition}

Following the consistency test, $p_i$ examines the $Result$ array for contradictions. Node $p_i$ checks the path of each message $m \in informedTopology_i$ with source $p_r$, neighborhood $neighborhood_r$ and $Path_r$. If every edge $(p_s,p_j)$ on the path appears in $Result[s]$ and $Result[j]$,  then we move to the next message. Otherwise,
we found a fake supporter, and therefore we reduce $Count[r]$ by one. If the resulting $Count[r]$ is smaller than $f+1$, we nullify the $r$'th entry of the $Result$ array.
Once all messages are processed, the $Result$ array consisting of the (confirmed) local topologies is the output.
At the end, $p_i$ forwards the arriving message, $m$, to each neighbor that does not appear in the path of $m$. The message sent by $p_i$ includes the node from which $m$ arrived as part of the path $m$.

\Subsection{The pseudocode of Algorithm~\ref{algo:discovery}}
%
%The pseudocode appears in Algorithm~\ref{algo:discovery}.
In every iteration of the infinite loop, $p_i$ starts to compute its preliminary topology view by calling $ComputeResults$
in line~\ref{ln:ComputeResults}. Then, every node $p_k$ in the queue $InformedTopology$, node $p_i$ goes over the messages in the
queue from head to bottom. While iterating the queue, for every message $m$ with source $p_k$, neighborhood $N_k$ and visited path $Path_k$,
$p_i$ inserts $Path_k$ to $opinion[N_k]$, see line~\ref{ln:insertPath}. After inserting, $p_i$ checks if there is a
neighborhood $Neig_k$ for which $opinion[Neig_k]$ contains at least $f+1$ vertex-disjoint paths,
see line~\ref{ln:found}. When such a neighborhood is found, it is stored in the $Result$
array (line~\ref{ln:found}). In line~\ref{ln:count}, $p_i$ stores the number of vertex
disjoint paths relayed messages that contained the selected neighborhood for $p_k$.
After computing an initial view of the topology, in line~\ref{ln:RemoveContradictions},
$p_i$ removes non-existing nodes from the computed topology.
For every message $m$ in $InformedTopology$, node $p_i$ aims at validating
its visited path. In line~\ref{ln:removeContradictions}, $p_i$ checks if there
exists a node $p_k$ whose neighborhood contradicts the visited path of $m$.
If such a node exists, $p_i$ decreases the associated entry in the $Count$ array
(line~\ref{ln:decreaseCount}). This decrease may cause $Count[r]$
to be smaller than $f+1$, in this case $p_i$ considers $p_k$ to be fake
and deletes the local topology of $p_k$ from $Result[r]$
(line~\ref{ln:resellgetsempty}).

%-----------------------------------------------------------------
\begin{wrapfigure}{r}{0.55\linewidth}
\BBB%\BBB
%\begin{figure*}[t!]
\fbox{
\begin{scriptsize}
\begin{minipage}{0.93\linewidth}
%\begin{superitemize}
\noindent $\bullet$ $Insert(m)$: Insert item $m$ to the queue head.

\noindent $\bullet$ $Remove(Message m)$: Remove item $m$ from the queue.

\noindent $\bullet$ $Iterator()$: Returns an pointer for iterating over the queue's items by their residence order in the queue.

\noindent $\bullet$ $HasNext()$: Tests whether the Iterator is at the queue end.

\noindent $\bullet$ $Next()$ Returns the next element to iterate over.

\noindent $\bullet$ $SizeOf()$ Returns the number of elements in the calling set.

\noindent $\bullet$ $MoveToHead(m)$: Move item $m$ to the queue head.

\noindent $\bullet$ $IsAfter(m, S)$: Test that item $m$ is after the items $m^{\prime} \in S$, where $S$ is the queue item set.
%\end{superitemize}
\end{minipage}
\end{scriptsize}
}
\caption{\B$Queue$: general purpose data structure for queuing items, and its operation list.\BBB}
\label{fig:queue}
%\end{figure*}
\BBB
\end{wrapfigure}
%-----------------------------------------------------------------------------

Upon receiving a message $m$, node $p_i$ inserts the message to the queue, in case it does not
already exist, and just moves it to the top of the queue in case it does.
The node $p_i$ now needs to relay the message $p_i$ got to all
neighbors that are not on the message visited path (line~\ref{ln:Send}).
When sending, $p_i$ also attaches the identifier of the node, from which the message was received,
to the visited path of the message.

%-----------------------------------------------------------------
\LinesNotNumbered \begin{algorithm*}[t!]
\begin{scriptsize}
\Input{$Neighborhood_{i}$: The ids of the nodes with which node $p_i$ can communicate directly}
\Output{$ConfirmedTopology \subset P \times P$: Discovered topology, which is represent by a directed edge set}

\Variable{$InformedTopology:Queue$, see Figure~\ref{fig:queue}: topological messages, $\langle node, neighborhood, path \rangle$}
\Funct{$NodeDisjointPaths(S)$: Test $S = \{ \langle node, neighborhood, path \rangle \}$ to encode at least $f+1$ vertex-disjoint paths}
\Funct{$PathContradictsNeighborhood(k, Neighborhood_k, path)$: Test that there is no node $p_j \in N$ for which there is an edge $(p_k, p_j)$ in the message's visited path, $path \subseteq P \times N$, such that $(p_k, p_j)$ is contradicting with $Neighborhood_k$}

%\Funct{$NodeDisjointMessages(S)$: Test that $S = \{ \langle node, neighborhood, path \rangle \}$ includes two messages, $m_1 = \langle p_k, Neighborhood_k, path_1\rangle$ and $m_2 = \langle p_k, Neighborhood_k, path_2\rangle$, such that $path_1$ and $path_2$ are vertex-disjoint, where $p_k \in N$ is a node and $Neighborhood_k$ is its neighborhood}

\nl \While{{\bf true}}{ \nllabel{ln:DoForever}

\nl $Result \gets ComputeResults()$  \nllabel{ln:ComputeResults}
%\tcc{Array of sets used for output calculation}
%{\bf let} $Result \gets ComputeResults()$  \nllabel{ln:ComputeResults} \tcc{Array of sets used for output calculation}

\nl {\bf let} $Result \gets RemoveContradictions(Result)$ \nllabel{ln:RemoveContradictions}

\nl $RemoveGarbage(Result)$ \nllabel{ln:RemoveGarbage}

\nl $ConfirmedTopology \gets ConfirmedTopology \cup ( \bigcup_{ p_k \in P } Result[k])$ \ \nllabel{ln:ConTopGetsConTop}

\nl \lForEach{$p_k \in N_i$}{{\bf send}$(i, Neighborhood_{i}, \emptyset)$ {\bf to } $p_k$} \nllabel{ln:sendNeighborhood}
}

%On receiving(r,R,V) from q:
\nl \Receive{($\langle \ell, Neighborhood_{\ell}, VisitedPath_{\ell} \rangle$) {\bf from} $p_j$} \nllabel{ln:Receive}

 \Begin{

\nl $Insert( p_\ell, Neighborhood_{\ell}, VisitedPath_{\ell} \cup \{ j \})$

%\tcc{Broadcast to all neighbors that have yet to received it}

\nl \lForEach{$p_k \in N_i$}{  \nllabel{ln:Send}
	   \lIf{$k \not \in VisitedPath_{\ell}$}{{\bf send}$(p_{\ell}, Neighborhood_{\ell}, VisitedPath_{\ell} \cup \{ j \}$) {\bf to } $p_k$ } \nllabel{c:VisitedPathUpj}
    }
}

\nl \Proced{$Insert(k, Neighborhood_k, VisitedPath_k)$}

 \Begin{

%\tcc{If message is already in $InformedTopology$, move it. Else if the message is valid, insert it}\

\nl \lIf{$ \langle k, Neighborhood_{k}, VisitedPath_{k} \rangle \in InformedTopology$}{$InformedTopology.MoveToHead(m)$}

\nl \lElseIf{$p_k \in N \wedge Neighborhood_k \subseteq indices(N) \wedge VisitedPath_k  \subseteq indices(N)$}{$InformedTopology.Insert(\langle k, Neighborhood_k, VisitedPath_k \rangle)$}

}

\nl \Funct{$ComputeResults()$}
 \Begin{

\nl \ForEach{$p_k \in P : \langle k, Neighborhood_k, VisitedPath_k \rangle \in InformedTopology$}{

    \nl {\bf let} $(FirstDisjointPathsFound, Message, opinion[]) \gets (false, InformedTopology.Iterator(), [\emptyset])$\

    \nl \While{$Message.hasNext()$}{
        \nl $\langle \ell, Neighborhood_{\ell}, VisitedPath_{\ell} \rangle  \gets Message.Next()$\

     \nl \lIf{$\ell = k$}{$opinion[Neighborhood_{\ell}].Insert(\langle$ $\ell,$ $Neighborhood_{\ell},$ $VisitedPath_{\ell}\rangle)$} \nllabel{ln:insertPath}

        \nl \If{$FirstDisjointPathsFound = false$ $\wedge$ $NodeDisjointPaths(opinion[Neighborhood_{\ell}])$}{$(Result[k],$ $FirstDisjointPathsFound) \gets (Neighborhood_{\ell},$ ${\bf true})$} \nllabel{ln:found}

        }

	\nl $Count[k] \gets opinion[k][Result[k.SizeOf()$ \nllabel{ln:count}
    }
\nl \Return $Result$
}

\nl \Funct{$RemoveContradictions(Result)$}
 \Begin{

    \nl \ForEach{$\langle r, Neighborhood_r, VisitedPath_r \rangle \in InformedTopology$}{

        \nl \If{$\exists p_k \in P : PathContradictsNeighborhood(p_k, Result[k], VisitedPath_r)={\bf true}$}{  \nllabel{ln:removeContradictions}

            \nl \lIf{$Neighborhood_r=Result[r]$}{$Count[r] \gets Count[r] - 1$\ \nllabel{ln:decreaseCount}}

            \nl \lIf{$Count[r] \leq f$}{$Result[r] \gets \emptyset$}\ \nllabel{ln:resellgetsempty}
        }
    }
\nl \Return $Result$
}

%%%% Remove edge loops
\nl \Proced{$RemoveGarbage(Result)$}
 \Begin{
    \nl \ForEach{$p_k \in N$}{

        \nl \lForEach{$m=\langle k, Neighborhood_k, VisitedPath_k \rangle \in InformedTopology : \{ k \} \cup  Neighborhood_k  \cup VisitedPath_k \not \subseteq P$ $\vee$ $InformedTopology.IsAfter(m,$ $opinion[k][Result[k]$$])$ }{
             $InformedTopology.Remove(m)$
        }
    }
}
\end{scriptsize}
\caption{Topology discovery (code for node $p_i$)}

\label{algo:discovery}
\end{algorithm*}

\Subsection{Algorithm's correctness proof.}
We now prove that within a linear amount of asynchronous rounds, the system stabilizes and every output is legal.
The proof considers an arbitrary starting configuration with arbitrary messages in transit that could be actually in the communication channel or already stored in $p_j$'s message queue and will be forwarded in the next steps of $p_j$. Each message in transit that traverse correct nodes can be forwarded within less than $\BigO(|C|)$ asynchronous rounds.
Note that any message that traverses Byzantine nodes and arrives to a correct node that has at least one Byzantine node in its path. The reason is that the correct neighbor to the last Byzantine in the path lists the Byzantine node when forwarding the message.
Thus, $f$ is at most the number of messages that encode vertex-disjoint paths from a certain source that are initiated or corrupted by a Byzantine node. Since there are at least $f+1$ vertex-disjoint paths with no Byzantine nodes from any source $p_k$ to any node $p_i$ and since $p_k$ repeatedly sends messages to all nodes on all possible paths, $p_i$ receives at least $f+1$ messages from $p_k$ with vertex-disjoint paths.

The FIFO queue usage and the repeated send operations of $p_k$ ensure that the most recent $f+1$ messages with vertex-disjoint paths in $InformedTopology$ queue are uncorrupted messages. Namely, misleading messages that were present in the initial configuration will be pushed to appear below the new $f+1$ uncorrupted messages. Thus, each node $p_i$ eventually has the local topology of each correct node (stored in the $Result_i$ array). The opposite is however not correct as local topologies of non-existing nodes may still appear in the result array. For example, $InformedTopology_i$ may include in the first configuration $f+1$ messages with vertex-disjoint paths for a non-existing node.
Since after $ComputeResults$ we know the correct neighborhood of each correct node $p_k$, we may try to ensure the validity of all messages.
For every message that encodes a non-existing source node, there must be a node $p_\ell$ on the message path, such that $p_\ell$ is correct and
$p_\ell$'s neighbor is non-existing, this is true since $p_i$ itself is correct. Thus, we may identify these messages and ignore them. Furthermore, no valid messages are ignored because of this validity check.

We also note that, since we assume that the nodes of the system are a subset of $P$, the size of the queue $InformedTopology$ is bounded. Lemma~\ref{l:BoundedMemory} bounds the needed amount of node memory (the proof details appear in Section~\ref{s:discoveryApp} of the Appendix). % and in~\cite{TR}. %After showing correctness and convergence time, .

\BB \begin{lemma}[Bounded memory]
\label{l:BoundedMemory}
%Let $p_i \in C$ be a correct node.
At any time, there are at most $n \cdot 2^{2 n}$ messages in $InformedTopology_i$, where $p_i \in C$, $n = |P|$ and $\BigO(n \log (n))$ is the message size.
\BB \end{lemma}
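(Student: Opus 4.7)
The strategy is to enumerate the set of well-formed (``valid'') messages that can ever occupy $InformedTopology_i$, and then to argue that the queue's deduplicating behavior keeps its size below this count. A message has the form $\langle k, Neighborhood_k, VisitedPath_k\rangle$. The filter in the $Insert$ procedure accepts a new entry only if $p_k \in N$, $Neighborhood_k \subseteq indices(N)$, and $VisitedPath_k \subseteq indices(N)$; since $|N| \le |P| = n$, the source field has at most $n$ possible values, and each of the two subset fields contributes at most $2^n$ possibilities. This yields at most $n \cdot 2^n \cdot 2^n = n \cdot 2^{2n}$ distinct valid messages in total.

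Next I would argue that the queue never contains two copies of the same valid message. The first branch of $Insert$ handles the case in which $\langle k, Neighborhood_k, VisitedPath_k\rangle$ is already present: it calls $MoveToHead$ rather than appending a fresh copy. The second branch inserts only if the triple is both valid and not already present. Consequently, after any step of a correct node $p_i$, every valid message appears at most once in $InformedTopology_i$.

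The remaining concern is that, in a self-stabilizing setting, the initial configuration may plant arbitrary (possibly invalid) entries in the queue. Here I would invoke $RemoveGarbage$: during any execution of the outer do-forever loop, it scans the queue and deletes every $m = \langle k, Neighborhood_k, VisitedPath_k\rangle$ with $\{k\} \cup Neighborhood_k \cup VisitedPath_k \not\subseteq P$. Hence, after $p_i$ completes a single iteration, every residual message is valid, and by the dedup argument above the queue holds at most $n \cdot 2^{2n}$ items from then on. Combined with the standard assumption in self-stabilization that node memory is bounded from the outset, the same bound applies at every point in the execution (the initial configuration is bounded by the hardware, and $RemoveGarbage$ brings it down to the combinatorial bound on the first loop iteration).

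Finally, for the message size: the source field is one identifier of $\lceil \log n\rceil$ bits, while $Neighborhood_k$ and $VisitedPath_k$ are subsets of $indices(N)$, each encodable in $O(n \log n)$ bits (as a list of at most $n$ identifiers). Summing gives $O(n \log n)$ bits per message, as claimed. The only subtlety worth stressing is the interplay between the initial configuration and the bound: this is the step I would expect to be the main obstacle to rigorously argue, since without $RemoveGarbage$ the $2^{2n}$ factor would not suffice to cover arbitrary junk planted in the queue, so one must carefully pin down that after a single iteration of the main loop the cleanup guarantees the combinatorial bound.
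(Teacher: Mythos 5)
Your proof is correct and takes essentially the same route as the paper: the bound is just the count $n \cdot 2^{n} \cdot 2^{n}$ of distinct triples $\langle k, Neighborhood_k, VisitedPath_k\rangle$ with all entries drawn from a universe of size $n$, applied to a duplicate-free queue. You are in fact more careful than the paper's own three-sentence argument, which leaves implicit both the deduplication performed by $Insert$ (via $MoveToHead$) and the handling of arbitrary initial queue contents that you address through $RemoveGarbage$.
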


\Subsection{$r$-neighborhood discovery.}
Algorithm~\ref{algo:discovery} demonstrates the existence of a deterministic self-stabilizing Byzantine resilient algorithm for topology discovery. Lemma~\ref{l:BoundedMemory} shows that the memory costs are high when the entire system topology is to be discovered. We note that one may consider the task of $r$-neighborhood discovery. Recall that in the $r$-neighborhood discovery task, it is assumed that every $r$-neighborhood cannot be partitioned by Byzantine nodes. Therefore, it is sufficient to constrain the maximal path length in line~\ref{ln:Send}. The correctness proof of the algorithm for the $r$-neighborhood discovery follows similar arguments to the correctness proof of Algorithm~\ref{algo:discovery}.
%
%In detail, Claim~\ref{c:CorrectRelaying} and Lemma~\ref{l:ClearChannel} use inductive arguments for demonstrating their correctness for path of any size, in particular $r$.

\Section{End-to-End Delivery} %Self-Stabilizing Byzantine
\label{s:e2e}
We present a design for a self-stabilizing Byzantine resilient algorithm for the transport layer protocol that uses the output of Algorithm~\ref{algo:discovery}.  The design is based on a function (named $getDisjointPaths()$) for selecting vertex-disjoint paths that contain a set of $f+1$ correct vertex-disjoint paths. We use $getDisjointPaths()$ and ARQ (Automatic Repeat reQuest) techniques for designing Algorithm~\ref{algo:end2end}, which ensures safe delivery between sender and receiver.

\Subsection{Exchanging messages over $f+1$ correct vertex-disjoint paths}
%The procedure $ByzantineFaultTolerantSend$ uses the output of Algorithm~\ref{algo:discovery}, which repeatedly discovers the network topology.
%The sender and the receiver exchange messages over $f+1$ correct vertex-disjoint paths.
We guarantee correct message exchange by sending messages over a polynomial number of vertex-disjoint paths between the sender and the receiver.
We consider a set, $CorrectPaths$, that includes $f+1$ correct vertex-disjoint paths. Suppose that $ConfirmedTopology$ (see the output of Algorithm~\ref{algo:discovery}) encodes a set, $Paths$, of $2f+1$ vertex-disjoint paths between the sender and the receiver. It can be shown that
$Paths$ includes at most $f$ incorrect paths that each contain at least one Byzantine node, i.e., $Paths \supseteq CorrectPaths$. As we see next, $ConfirmedTopology$ does not always encode $Paths$, thus, one needs to circumvent this difficultly.

Note that even though $2f+1$ vertex-disjoint paths between the sender and the receiver are present in the communication graph, the discovered topology in $ConfirmedTopology$ may not encode the set $Paths$, because $f$ of the paths in the set $Paths$ can be controlled by Byzantine nodes. Namely, the information about at least one edge in $f$ of the paths in the set $Paths$, can be missing in $ConfirmedTopology$.

We consider the problem of relaying messages over the set $CorrectPaths$ when only $ConfirmedTopology$ is known, and propose three implementations to the function $getDisjointPaths()$ in Figure~\ref{fig:procedures}. The value of $ConfirmedTopology$ is a set of directed edges $(p_i,p_j)$. An undirected edge is approved if both $(p_i,p_j)$ and $(p_j,p_i)$ appear in $ConfirmedTopology$. Other edges in $ConfirmedTopology$ are said to be suspicious.
%
%The arguments used in Figure~\ref{fig:procedures} assume that the system is in a safe configuration with respect to Algorithm~\ref{algo:discovery}.
For each of the proposed implementations, we show in Section~\ref{s:ipf} of the Appendix that a polynomial number of paths are used and that they contain $CorrectPaths$. Thus, the sender and the receiver can exchange messages using a polynomial number of paths and message send operations, because each path is of linear length.

%We focus on a solution suitable for the general case, which uses polynomial time and message cost.

%The solution considers Definition~\ref{d:SuspiciousEdge} and assumes that Byzantine nodes cannot be immediate neighbors and that all neighbors of a given Byzantine node refer to the Byzantine with the same identifier.

%If there exists at least one such path set, the sender can safely use them to communicate with the receiver (similar to Algorithm~\ref{algo:discovery}). However, the discovered topology may not include even one such set. The reason is that $f$ of the paths that should appear in the discovered topology may be controlled by Byzantine nodes. [@@ Didn't assume that the graph include such paths? It is confusing. @@ Namely, the information about at least one edge in each such path may not arrive to the sender.

\begin{scriptsize}
%-----------------------------------------------------------------
\begin{figure*}[t!]%{r}{0.47\linewidth}
%\B%\B%BB
%\begin{figure*}[t!]
\fbox{
\begin{small}
\begin{minipage}{0.955\linewidth}
%\begin{superitemize}
{\bf The case of constant $r$ and $\Delta$.}~~~~ The sender and the receiver exchange messages by using all possible paths between them; feasible considering $r$-neighborhoods, where the neighborhood radius, $r$, and the node degree $\Delta$ are constants.

{\bf The case of constant $f$.} ~~~~
%This procedure entails sending a message on a path set, $Paths$, where $|Paths|$ is polynomial and $CorrectPaths \subseteq Paths$. %Namely, Moreover, these paths are sufficient to guarantee safe delivery of the message.
%
%We explain how the sender and the receiver select a set of vertex-disjoint paths, ${\cal P}(p_1,p_2, \ldots p_f) \subseteq Paths$, that contains $f+1$ correct vertex-disjoint paths.
%
%${\cal P}(p_1,p_2, \ldots p_f) \subseteq Paths$
%
For each possible choice of $f$ system nodes, $p_1,p_2, \ldots p_f$, the sender and the reciter compute a new graph $G(p_1,p_2, \ldots p_f)$ that is the result of removing $p_1,p_2, \ldots p_f$, from $G_{out}$, which is the graph defined by the discovered topology, $ConfirmedTopology$. Let ${\cal P}(p_1,p_2, \ldots p_f)$ be a set of $f+1$ vertex-disjoint paths in $G(p_1,p_2, \ldots p_f)$ (or the empty set when ${\cal P}(p_1,p_2, \ldots p_f)$ does not exists) and $Paths = \bigcup_{p_1,p_2, \ldots p_f} {\cal P}(p_1,p_2, \ldots p_f)$. The sender and the receiver can exchange messages over $Paths$, because $|Paths|$ is polynomial at least one choice of $p_1,p_2, \ldots p_f$, has a corresponding set ${\cal P}(p_1,p_2, \ldots p_f)$ that contains $CorrectPaths$ (Section~\ref{s:ipf} of the Appendix).

%First we show that this procedure only sends messages through a polynomial number of paths. There are $\BigO(n^f)$ possible chooses of $f$ system nodes. Thus, $\BigO(n^f )$ path sets are computed, and since $f$ is a constant, this number is polynomial. Moreover, each such set contains at most $f + 1$ simple paths of linear length, because $p_i$ only computes sets, ${\cal P}(p_1,p_2, \ldots p_f)$, of size $f+1$. Thus, the sender and the receiver can exchange messages using a polynomial number of paths and message send operations.

%We show that $CorrectPaths \subseteq Paths$. Consider the permutation choice, $p_1,p_2, \ldots p_f$, in which the set actually contains the set of Byzantine nodes in the system. Thus $G(p_1,p_2, \ldots p_f)$ contains only correct nodes. Furthermore, at least $f+1$ paths that were present in $G_{out}$ are still present in $G(p_1,p_2, \ldots p_f)$, since $G(p_1,p_2, \ldots p_f)$ was obtained from $G_{out}$ by removing of $f$ (Byzantine) nodes, $p_1,p_2, \ldots p_f$. Hence, there are at least $f+1$ correct vertex-disjoint paths in $G(p_1,p_2, \ldots p_f)$, in ${\cal P}(p_1,p_2, \ldots p_f)$ and in $Paths$.

{\bf The case of no Byzantine neighbors} ~~~~ The procedure assumes that any Byzantine node has no directly connected Byzantine neighbor in the communication graph. Specifically, this polynomial cost solution considers the (extended) graph, $G_{ext}$, that includes all the edges in $confirmedTopology$ and {\em suspicious edges}. Given three nodes, $p_i, p_j, p_k \in P$, we say that node $p_i$ considers the undirected edge $(p_k, p_j)$ suspicious, if the edge appears as a directed edge in $ConfirmedTopology_i$ for only one direction, e.g., $(p_j,p_k)$.

%, see Definition~\ref{d:SuspiciousEdge}.

%\BB\B
%\begin{definition}[Suspicious edges]
%%
%\label{d:SuspiciousEdge}
%%
%Given three nodes, $p_i, p_j, p_k \in P$, we say that node $p_i$ considers the undirected edge $(p_k, p_j)$ suspicious, if the edge appears as a directed edge in $ConfirmedTopology_i$ for only one direction, e.g., $(p_j,p_k)$.
%\end{definition}
%\BB\B

The extended graph, $G_{ext}$, may contain fake edges that do not exists in the communication graph, but Byzantine nodes reports on their existence. Nevertheless, $G_{ext}$ includes all the correct paths of the communication graph, $G$. Therefore, the $2f+1$ vertex-disjoint paths that exists in $G$ also exists in $G_{ext}$ and they can facilitate a polynomial cost solution for the message exchange task (Section~\ref{s:ipf} of the Appendix).

%Let $G^{\prime}=(N, E_{G^{\prime}})$ be the graph computed from $ConfirmedTopology$ and its suspicious edges. %, see Definition~\ref{d:SuspiciousEdge}. We demonstrate that $G^{\prime}$'s edges, $E_{G^{\prime}}$, contains the edges, $E_{G}$, of the communication graph, $G$.
%
%Let us consider $e=(p_j, p_k) \in E_{G}$ and show that $e \in E_{G^{\prime}}$. When both $p_j$ and $p_k$ are correct, the correctness of Algorithm~\ref{algo:discovery} implies $e \in E_{G^{\prime}}$. Suppose that $p_j$ is correct and $p_k$ is Byzantine, and consider the different cases in which $p_k$ decides to report (or not to report) about $e$ as part of its local neighborhood. Namely, either $e \in ConfirmedTopology$, or $e$ is a suspicious edge, because $p_i$ reports about $e$, and $p_k$ decides to report, and respectively, not to report. Since $G \subseteq G^{\prime}$, $G^{\prime}$ must contain $2f+1$ vertex-disjoint paths between any sender $p_s$ and receiver $p_r$, because  $G$ does.
%
%Moreover, the same arguments implies that there may be at most $f$ incorrect paths, which contain each at least one Byzantine node. Hence, there are at least $f+1$ correct vertex-disjoint paths in $Paths$.

%\end{superitemize}
\end{minipage}
\end{small}
}
\caption{\B Implementation proposals for the function $getDisjointPaths()$.\BBB}
\label{fig:procedures}
%\end{figure*}

\end{figure*}
%-----------------------------------------------------------------------------
\end{scriptsize}

\Subsection{Ensuring safe message delivery}
We propose a way for the sender and the receiver, that exchange a message over the paths in $getDisjointPaths()$, to stop considering messages and acknowledgments sent by Byzantine nodes. They repeatedly send messages and acknowledgments over the selected vertex-disjoint paths. Before message or acknowledgment delivery, the sender and the receiver expect to receive each message and acknowledgment at least $(capacity \cdot n+1)$ consecutive times over at least $f+1$ vertex independent paths, and by that provide evidence that their messages and acknowledgments were indeed sent by them.

We employ techniques for labeling the messages (in an ARQ style), recording visited path of each message, and counting the number of received message over each path. The sender sends messages to the receiver, and the receiver responds with acknowledgments after these messages are delivered to the application layer. Once the sender receives the acknowledgment, it can fetch the next message that should be sent to the receiver. The difficulty here is to guarantee that the sender and receiver can indeed exchange messages and acknowledgments between them, and stop considering messages and acknowledgments sent by Byzantine nodes.

The sender repeatedly sends message $m$, which is identified by $m.ARQLabel$, to the receiver over all selected paths. The sender does not stop sending $m$ before it is guaranteed that $m$ was delivered to the application layer of the receiving-side. When the receiver receives the message, the set $m.VisitedPath$ encodes the path along which $m$ was relayed over. Before delivery, the receiver expects to receive $m$ at least $(capacity \cdot n+1)$ consecutive times from at least $f+1$ vertex independent paths. Waiting for $(capacity \cdot n+1)$ consecutive messages on each path, ensures that the receiver gets at least one message which was actually sent recently by the sender. Once the receiver delivers $m$ to the application layer, the receiver starts to repeatedly acknowledge with the label $m.ARQLabel$ over the selected paths (while recording the visited path). The sender expects to receive $m$'s acknowledgment at least $capacity \cdot n+1$ consecutive times from at least $f+1$ vertex independent paths before concluding that $m$ was delivered to the application layer of the receiving-side.

Once the receiver delivers a message to the application layer, the receiver starts to repeatedly acknowledge the recently delivered message over the selected paths. In addition, the receiver also restarts its counters and the log of received messages upon a message delivery to the application layer.
Similarly the sender count acknowledgments to the current label used, when the sender receives at least $capacity \cdot n+1$ acknowledgments over $f+1$ vertex-disjoint paths, the sender fetches the next message from the application layer, changes the label and starts to send the new message.

%We note that starting from an arbitrary configuration, the sender eventually fetches a message from the application layer. This is obvious since if the sender is sending the same message forever, then the receiver counters on $f+1$ paths must exceed $capacity \cdot n+1$. From this point the receiver sends acknowledgments with the correct label forever ensuring that the sender fetches the next message.

%-----------------------------------------------------------------
\begin{scriptsize}
\LinesNotNumbered \begin{algorithm*}[t!]
\begin{scriptsize}
\Inter{$FetchMessage()$: Gets messages from the upper layer. We denote by $InputMessageQueue$ the unbounded queue of all messages that are to be delivered to the destination}
\Inter{$DeliverMessage(Source, Message)$: Deliver an arriving message to the higher layer. We denote by $OutputMessageQueue$ the unbounded queue of all  messages that are to be delivered to the higher layer. We assume that it always contains at least the last message inserted to it}
\Input{$ConfirmedTopology$: The discovered topology (represented by a directed edge set, see Algorithm~\ref{algo:discovery})}
\DataStructure{Transport layer messages: $\langle Source,$ $Destination,$ $VisitedPath,$ $IntentedPath,$ $ARQLabel,$ $Type,$ $Payload \rangle$, where $Source$ is the sending node, $Destination$ is the target node, $VisitedPath$ is the actual relay path, $IntentedPath$ is the planned relay path, $ARQLabel$ is the sequence number of the stop-and-wait ARQ protocol, and $Type$ $\in$ $\{ Data,$ $ACK \}$ message type, where DATA and ACK are constant}
%~~~~~\Field{$Payload$: the message data}
%\DataStructure{$Queue$: General purpose data structure for queuing items}
%~~~~~\Oper{$Insert(m)$: Insert item $m$ to the head of the queue}
\Variable{$Message$: the current message being sent}
\Variable{$ReceivedMessages[j][Path]$ : queue of $p_j$'s messages that were relayed over path $Path$}
\Variable{$Confirmations[j][Path]$ : $p_j$'s acknowledgment queue for messages that were relayed over $Path$}
\Variable{$label$: the current sequence number of the stop-and-wait ARQ protocol}
\Variable{$Approved$: A Boolean variable indicating whether $Message$ was accepted at the destination}
\Funct{$NodeDisjointPaths(S)$: Test $S$, a set of paths, to encode at least $f+1$ vertex-disjoint paths}
\Funct{$FloodedPath(MessageQueue, m)$ : Test whether $m$ is encoded by the first $capacity \cdot n + 1$ messages in $MessageQueue$.}
%, where $capacity$ is an upper bound on the number of messages in transit over a communication link.}
%\Funct{$SuspiciousEdges()$ : Get the set of suspicious edges}
\Funct{$getDisjointPaths(ReportedTopology, Source, Destination)$ : Get a set of vertex-disjoint paths between $Source$ and $Destination$ in the discovered graph, $ReportedTopology$ (Figure~\ref{fig:procedures}).}
\Funct{$ClearQueue(Source)$ : Delete all data in $ReceivedMessages[Source][\ast]$}
\Funct{$ClearAckQueue(Destination)$ : Delete all data in $Confirmations[Destination][\ast]$}

\nl \While{{\bf true}}{
\nl $ClearAckQueue(Message.Destination)$\\   \nllabel{ln:ClearAcks}
\nl  $(Message, label) \gets (FetchMessage(), {label} + 1 ~modulo ~3)$\\         \nllabel{ln:fetch}  \nllabel{ln:changeLabel}
\nl  \lWhile{$Approved = {\bf false}$}{$ByzantineFaultTolerantSend(Message)$}  \nllabel{ln:FaultSend}
}
\nl  \Receive{($msg$) {\bf From} $p_j$}   \nllabel{ln:receive1}
\Begin{
\nl  \If{$msg.Destination \neq i$}{   \nllabel{ln:checkDest}
\nl      $msg.VisitedPath \gets msg.VisitedPath \cup \{ j \}$  \nllabel{ln:attach}

\nl      ${\bf send}$($msg$) {\bf to}  {\bf next} ($msg.IntendedPath$) \nllabel{ln:notdestsend}
}

\nl  \ElseIf{$msg.Type = Data$}{ \nllabel{ln:checkType}
\nl  		 $ReceivedMessages[msg.Source][msg.VisitedPath].insert(\langle$ $msg.Payload,$ $msg.ARQLabel$ $\rangle)$    \nllabel{ln:insertData}

\nl           {\bf let} $Paths \gets \{ Path$ $:$ $FloodedPath(Confirmations[msg.Source][Path],$ $msg)\}$
		
\nl  		\If{$NodeDisjointPaths(Paths)$}{ \nllabel{ln:isEnough}

\nl  $NewMesssage \gets {\bf true}$ \nllabel{ln:newMessageTrue}

\nl  $Confirm(msg.Source, m.ARQLabel, m.Payload)$   \nllabel{ln:confirmMessage}
}

	}
\nl  \ElseIf{$msg.Type = ACK$}{
\nl        \lIf{$label = msg.ARQLabel$}{$Confirmations[msg.Source][msg.VisitedPath].insert(\langle msg.Payload, msg.ARQLabel \rangle)$}

\nl           {\bf let} $Paths \gets \{ Path$ $:$ $FloodedPath(Confirmations[msg.Source][Path],$ $\langle msg.Payload,$ $msg.ARQLabel$ $\rangle )\}$ \nllabel{ln:DataApprovedTrue}
		
\nl  		\lIf{$NodeDisjointPaths(Paths)$}{$Approved \gets {\bf true}$} \nllabel{ln:ApprovedTrue} \nllabel{ln:isEnoughConf}

}
}

\nl  \Funct{$Confirm(Source, ARQLabel, Payload)$}
\Begin{
\nl     	\lIf{$CurrentLabel \neq ARQLabel$}{$DeliverMessage(Source, Payload)$} \nllabel{ln:deleiverMessage} %@@new@@

\nl  	$(CurrentLabel, NewMessage) \gets (ARQLabel, {\bf false})$  \nllabel{ln:AssignLabel} \nllabel{ln:newMessageFalse}

\nl  	$ClearQueue(Source)$ \nllabel{ln:clear}

%\nl  	$NewMessage \gets {\bf false}$\\\nllabel{ln:newMessageFalse}
%\nl  	$NewMessage = {\bf false}$\\\nllabel{ln:newMessageFalse} Tell Omri to fix.

\nl  	\lWhile{$NewMessage = {\bf false}$}{$ByzantineFaultTolerantSend(\langle$ $Source,$ $ARQLabel,$ $ACK,$ $Payload \rangle)$} \nllabel{ln:sendAcks}
}

\nl  \Funct{$ByzantineFaultTolerantSend(\langle Destination, ARQLabel, Type, Payload \rangle)$}
\Begin{

\nl  	{\bf let} $Paths \gets getDisjointPaths(ConfirmedTopology, i, Destination)$ \nllabel{ln:getDisjoint}

% \cup SuspiciousEdges()

\nl  	\lForEach{$Path \in Paths$}{
          {\bf send}$(\langle i, Destination, \emptyset, Path, ARQLabel, Type,           Payload \rangle)$ {\bf to} {\bf first}$(Path)$ \nllabel{ln:send}
     }
	
}
\end{scriptsize}
\caption{Self-stabilizing Byzantine resilient end-to-end delivery ($p_i$'s code)}

\label{algo:end2end}
\end{algorithm*}
\end{scriptsize}
%-----------------------------------------------------------------------------

\Subsection{The pseudocode of Algorithm~\ref{algo:end2end}}
In every iteration of the infinite loop, $p_i$ fetches $Message$, prepares $Message$'s label (line ~\ref{ln:changeLabel}) and starts sending $Message$ over the selected paths, see the procedure $ByzantineFaultTolerantSend(Message)$. When $p_i$ gets enough acknowledgments for $Message$ (line~\ref{ln:FaultSend}), $p_i$ stops sending the current message and fetches the next.
Upon receiving a message $msg$, node $p_i$ tests $msg$'s destination (line~\ref{ln:checkDest}). When $p_i$ is not $msg$'s destination, it forwards $msg$ to the next node on $msg$'s intended path, after updating $msg$'s visited path. When $p_i$ is $msg$'s destination,
$p_i$ checks $msg$'s type (line~\ref{ln:checkType}). When $msg$'s type is {\em Data}, $p_i$ inserts the message payload and label
to the part of the data structure associated with the message source, i.e., the sender,
and the message visited path (line~\ref{ln:insertData}). In line~\ref{ln:isEnough}, node $p_i$ checks whether $f+1$ vertex-disjoint paths
relayed the message at least $capacity \cdot n +1 $ times, where $capacity$ is an
upper bound on the number of messages in transit over a communication link.
If so, $p_i$ delivers the $msg$ to the application layer (line~\ref{ln:deleiverMessage}), clears the
entire data structure and finally sends acknowledgments on the selected paths
until a new message is confirmed. Moreover, in line~\ref{ln:newMessageFalse} we signal that
we are ready to receive a new message.
When $msg$'s type is $ACK$, we act almost as when the message is of type $Data$. When the condition in line~\ref{ln:ApprovedTrue} holds, we signal that the message was confirmed at the receiver by setting $Approved$ to be $true$, in line~\ref{ln:ApprovedTrue}.
We note that the code of Algorithm~\ref{algo:end2end} considers only one possible pair of source and destination. A many-source to many-destination version of this algorithm can simply use a separate instantiation of this algorithm for each pair of source and destination.

%@@ Define legal execution.
%We have an unbounded Message input Queue and we have  message output queue that always includes at least the last message added to it.
%In a legal execution messages from the sender message input queue are fetched, send an enter the receiver's message output queue by their sending order.
%@@

\Subsection{Correctness proof.}
We show that message delivery guarantees hold after a bounded convergence period. The proof is based on the system's ability to relay messages over $f+1$ correct vertex-disjoint messages (Figure~\ref{fig:procedures}), and focuses on showing safe message delivery between the sender and the receiver. After proving that the sender fetches messages infinitely often, we show that within four such fetches, the message delivery guarantees hold; receiver-side delivers all of the sender's messages and just them. The proof in detail appears in Section~\ref{s:end2endproof} of the Appendix. % and in~\cite{TR}.

%The proof arguments are based on the ARQ labels that manage the correct message delivery.
%
%Following the fetch of each of the first three messages and before the next one, the sender must count $capacity \cdot n+1$ acknowledgments with the current label that the sender uses to send, namely with $0$, $1$ and $2$. Since the sender reset the counters when changing the sending label to $1$, the receiver must send at least one acknowledgment with label $1$ and then with label $2$, following the corresponding fetches. Thus, the receiver must clear its counters at least once following the second fetch and before the fourth fetch and then start sending acknowledgments with label $2$. After clearing the counters by the receiver and starting sending acknowledgments with label $2$ a message with label $0$ that is next to be sent, must be delivered and no other message can be counted as arriving at least $capacity \cdot n+1$ times through $f+1$ vertex-disjoint paths.

%
%Let us consider three labels, $0$, $1$, and $2$ that are used by the sender in a round robin fashion.
%
%
Let us consider messages, $m$, and their acknowledgements, that arrive at least $(capacity \cdot n+1)$ times over $f+1$ vertex-independent paths, to the receiver-side, and respectively the sender-side, with identical payloads and labels. The receiver, and respectively the sender,
has the {\em evidence} that $m$ was {\em indeed sent by the sender,} and respectively, {\em acknowledged by the receiver.} The sender and the receiver {\em clear} their {\em logs} whenever they have such evidences about $m$. The proof shows that, after a finite convergence period, the system reaches an execution in which the following events reoccur: (\op{Fetch}) the sender clears its log, fetches message $m$, and sends it to the receiver, (\op{R-Get}) the receiver gets the evidence that $m$ was indeed sent by the sender, (\op{Deliver}) the receiver clears its log, delivers $m$, and acknowledge it  to the sender, and (\op{S-Get}) the sender gets the evidence that $m$ was acknowledged by the receiver. Namely, the system reaches a legal execution.

%Whenever these $capacity \cdot n+1$ messages have the identical payloads and labels, the receiver delivers them, nullify the counters, empty queues and send acknowledges with the label of the delivered message over a selected set of vertex-disjoint paths (cf. line~\ref{ln:confirmMessage}).
%
%
%The sender clears counters and queues whenever the sender changes label.

First we prove that event \op{Fetch} occurs infinitely often, in the way of proof by contradiction. % (Lemma~\ref{l:RfeAfell}).
Let us assume (towards a contradiction) that the sender fetches message $m$ and then never fetches another message $m^\prime$. The sender sends $m$ and counts acknowledgments that has $m$'s label. According to the algorithm, the sender can fetch the next message, $m^\prime \neq m$, when it has the evidence that $m$ was indeed acknowledged by the receiver. The receiver acknowledges $m$'s reception when it has the evidence that $m$ was indeed sent by the sender. After nullifying its logs, the receiver repeatedly sends $m$'s acknowledgments until it has evidences for other messages, $m^\prime$, that were indeed sent by the sender after $m$. By the assumption that the sender never fetches $m^\prime \neq m$, we have that the receiver keeps on acknowledging $m$ until $m^\prime \neq m$ arrives from the sender. Therefore, $m$ arrives from the sender to the receiver, and the receiver acknowledges $m$ to the sender. Thus, a contradiction that the sender never fetches $m^\prime \neq m$.

The rest of the proof shows that (eventually) between every two event of type \op{Fetch}, also the events \op{R-Get}, \op{Deliver} and \op{S-Get} occur (and in that order). We show that this is guaranteed within four occurrences of event \op{Fetch}.
%
%[label_m -> link_{s,r} -> label(log_r)  -> label{ack}  -> link_{r,s}  -> label(log_s)]
%
%[0 -> link_{s,r} -> 0, [1/2], 0  -> 0, [1/2], 0  -> link_{r,s}  -> [1,2], 0 ]
%
%We then turn to show that the receiver-side delivers all (and only) the sender's messages. The difficulty here is to show that, in addition to the corrupted messages that can be present in the starting configuration, the system has to deal with Byzantine nodes that can introduce corrupted messages at any point of the system execution.
%
%The proof counts the number of time that the sender fetches a new message. It shows that within four such fetches, the receiver-side delivers no Byzantine message and delivers all the sender's messages. Immediately after the first fetch, say when the sender label was change from $0$ to $1$, the sender's log is empty, thus, it cannot accept
%
%Then in between the second and the fourth fetch of any four successive fetches, where without the loss of generality, the first fetch is with label $0$, the second with $1$, the third with label $2$ and the fourth with $0$ the receiver clears its counter and the last fetched message in this sequence that is with label $0$ is later delivered.
%
%
Following the fetch of each of the first three messages and before the next one, the sender must have evidence that the receiver executed event \op{Deliver}, i.e., clearing the receiver's log. Note that during convergence, this may surely be false evidence. Just before fetching a new message in event \op{Fetch}, the sender must clear its logs and reassign a label value, say, the value is $0$. There must be a subsequent fetch with label $1$, because, as explained above, event \op{Fetch} occurs (infinitely often). Since the sender clears its logs in event \op{Fetch}, from now on and until the next event \op{Fetch}, any corrupted message found in the sender's log  must be of Byzantine origin. Therefore, the next time sender gets the evidence that $m$ was acknowledged by the receiver, the receiver has truly done so.
%We note that there are $f$ Byzantine nodes, and $2f+1$ vertex independent paths between the recover and the sender. Moreover, before the next event (1) the sender checks whether $f+1$ vertex-disjoint paths relayed an acknowledgment at least $capacity \cdot n +1 $ times. Therefore, at least one of these acknowledgments must have been sent by the receiver.
% Thus, when the sender has enough evidence regarding the message with label $1$, it must be that the receiver sent at least one acknowledgment with label $1$.
%Following the same arguments, we get that the receiver also sent an acknowledgment with label $2$.
Note that between any such two (truthful) acknowledgments (with different labels), say with label, $1, 2, \ldots$, the receiver must execute event \op{Deliver} and clean its log, see Algorithm~\ref{algo:end2end}, line~\ref{ln:clear}.
%
%Therefore, when the sender executes event \op{Fetch}, this time with label $0$, the receiver's log is clear, we have that eventually, the receiver's logs will contain evidence for the message with label $0$.
%
Since the sender sends over $f + 1$ correct paths, and the receiver's logs are clear, eventually the
receiver will have evidence for the message with label $0$. As corrupted messages
originate only from Byzantine nodes and there are at most $f$ such nodes,
the receiver's log may not contain evidence for non-sender messages.
To conclude, starting from the $4$-th message, the receiver will confirm all of the sender's messages, and will not confirm non-sender messages.

\Section{Extensions and Conclusions} %Concluding Remarks}
\label{s:ext}
%
%This work had presented two deterministic, self-stabilizing Byzantine-resilience algorithms for
%topology discovery and end-to-end message delivery. We have also considered an $r$-neighborhood
%network discovery algorithm as well.
%
As an extension to this work, we suggest to combine the algorithms for $r$-neighborhood network discovery and the
end-to-end capabilities in order to allow the use of end-to-end message delivery within
the $r$-neighborhoods. These two algorithms can be used by the nodes, under reasonable node density assumptions,
for discovering their $r$-neighborhoods, and, subsequently, extending the scope of their
end-to-end capabilities beyond their $r$-neighborhood, as we describe in the following.
We instruct further remote nodes to relay topology information, and in this way collect information on remote neighborhoods. One can consider an algorithm for studying specific remote neighborhoods that are defined, for example, by their geographic region, assuming the usage of GPS inputs; a specific direction and distance from the topology exploring node defines the exploration goal.
The algorithm nominates $2f+1$ nodes in the specific direction to return further information towards the desired direction. The sender uses end-to-end
communication to the current $2f+1$ nodes in the {\em front} of the current exploration, asks them for their $r$-neighborhood, and chooses a new set of $2f+1$ nodes for forming a new front. It then instructs each of the current nodes in the current front to communicate with each node in the chosen new front, to nominate the new front nodes
to form the exploration front.

To ensure stabilization, this interactive process of remote information collection should never stop. Whenever the current collection process investigates beyond the closest $r$-neighborhood, we concurrently start a new collection process in a pipeline fashion. The output is the result of the last finalized collection process. Thus, having a correct output after the first time a complete topology investigation is finalized.

%The studied algorithms are deterministic, and as such, offer a high predictability degree. One can consider an alternative probabilistic approach in order to circumvent the high message cost of Algorithm~\ref{algo:discovery}. Namely, when neighborhood radius, $r$, is in the order of the graph diameter, the message cost is exponential. An random algorithm can let each topology message contain a message sequence number that the sender, $p_s$, increases. The relaying nodes, $p_i$, forward merely a fixed random subset of the topological messages sent from $p_s$ and containing the same message sequence number. Each topology message is sent $\BigO(1)$ times on each edge. Thus, the message cost is polynomial. Moreover, when the receiver, $p_r$, considers topological messages, $m$, it checks to see whether a majority of vertex-disjoint paths starting in $p_s$ relayed messages that support the topological knowledge contained in $m$. Note that there is no demand that $f+1$ vertex-disjoint path relay the same topological information.  @@ Please provide more details on how and why these extensions work. @@

%In this work we presented two deterministic, self-stabilizing Byzantine-resilience algorithms for topology discovery and end-to-end message delivery. We have also considered an algorithm for discovering $r$-neighborhood in polynomial time, communication and space. Lastly, we mentioned a possible extension for exploring and communicating with remote $r$-neighborhoods using polynomial resources as well.

In this work we presented two deterministic, self-stabilizing Byzantine-resilience algorithms for topology discovery and end-to-end message delivery. We have also considered an algorithm for discovering $r$-neighborhood in polynomial time, communication and space. Lastly, we mentioned a possible extension for exploring and communicating with remote $r$-neighborhoods using polynomial resources as well.

The obtained end-to-end capabilities can be used for communicating the public keys of parties and establish private keys, in spite of $f$ corrupted nodes that may try to conduct man-in-the-middle attacks, an attack that the classical Public key infrastructure (PKI) does not cope with. Once private keys are established encrypted messages can be forwarded over any specific $f + 1$ node independent paths, one of which must be Byzantine free. The Byzantine free path will forward the encrypted message to the receiver while all corrupted messages will be discarded. Since our system should be self-stabilizing, the common private secret should be re-established periodically.

%In addition, a unique random epoch number should be established. The chosen epoch number is used until the next re-establishment period. The epoch number should be encoded as an integral part of each message (encrypted with the message in a non-separable non-constant fashion). Thus, disabling replay attacks by the nodes that are not the sender or the receiver.

\bibliographystyle{abbrv}
\bibliography{bib/RelatedWork}
\onecolumn
\appendix
\FFF
\Section{Correctness of Algorithm~\ref{algo:discovery}}
\label{s:discoveryApp}
%
%----------------------------------------
\noindent {\bf Lemma~\ref{l:BoundedMemory} (Bounded memory)}
{\em Let $p_i \in C$ be a correct node. At any time, there are at most $n \cdot 2^{2 n}$ messages in $InformedTopologyany_i$, where $n = |P|$ and $\BigO(|P| \log (|P|))$ is the message size.
}

\begin{proof}
The queue $InformedTopologyany_i$, is made up of messages in the form $\langle node, neighborhood, visited path \rangle$.
All nodes that appear in the message, i.e., in the first, second or third entry of the tuple are in $N$.
The first entry, i.e. the node name is one of $n$ possibilities. The second and third entries are subsets of $N$. Thus each of them has $2^n$ possibilities. In total there can be at most $2^{n} \cdot 2^{n}  \cdot n$ messages in every $InformedTopologyany_i$.
\end{proof}

%We now turn to the details of the correctness proof (the complete proofs appear in the Appendix and in~\cite{TR}).
%
Definition~\ref{d:legal} specifies the requirements of the network topology discovery task. Definition~\ref{d:CorrectPath} considers correct paths and Definition~\ref{d:IncorruptMessages} considers uncorrupted graph topology messages.

 \begin{definition}[Legal output]
\label{d:legal}
Given correct node $p_i \in C$, we say that $p_i$'s output is {\em legal}, if it encodes graph $G_{output}=(V_{out}, E_{out})$:
(1) $C \subseteq V_{out} \subseteq C \cup B \subseteq N$, and (2) $(E \cap (C \times C)) \subseteq E_{out} \subseteq E \subseteq N \times N$.
 \end{definition}

 \begin{definition}[Correct path]
\label{d:CorrectPath}
We say $path \subseteq N$ is a {\em correct} one if all its nodes are correct, i.e., $path \subseteq C$.
 \end{definition}

 \begin{definition}[Valid message]
\label{d:IncorruptMessages}
In Algorithm~\ref{algo:discovery}, we refer to a message $m=\langle k, Neighborhood_k, VisitedPath_k \rangle$ as a {\em  valid} message when: (1) $p_k \in C$ and $VisitedPath_k$ encodes a correct path in the communication graph, $G$, that starts in $p_k$, and (2) $Neighborhood_k = indices(N_k)$.
 \end{definition}

Lemma~\ref{l:ClearChannel} shows that eventually correct paths do not relay non valid messages. Namely, invalid messages can only exist as the result of: (1) Byzantine
interventions that corrupt messages, or (2) transient faults, which occur only prior to the arbitrary starting configuration considered.~\footnote{This is a common way to argue about self-stabilization, we consider executions that start in an arbitrary configuration that follows the last transient fault, recalling that if additional transient faults occur a new arbitrary configuration is reached from which automatic convergence starts.}

We note that we consider asymptotic behavior in the following lemma and thus, $capacity$ is omitted from the number of asynchronous round until stabilization.
 \begin{lemma}[Eventually valid messages]
\label{l:ClearChannel}
Let $R$ be a fair execution of Algorithm~\ref{algo:discovery} that starts in an arbitrary configuration. Within $\BigO(|N|)$ asynchronous rounds, the system reaches a configuration after which only valid messages are relayed on correct paths.
 \end{lemma}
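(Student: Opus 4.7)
The plan is to account for every message that can be relayed on a correct path (one whose \textit{VisitedPath} lies entirely in $C$) and to argue that, after a transient period of $\BigO(|N|)$ asynchronous rounds, each such message satisfies Definition~\ref{d:IncorruptMessages}. The classification has three sources: messages emitted by the main loop of a correct node, messages forwarded by the Receive handler of a correct node, and messages sitting in channel queues $queue_{j,i}$ at the initial configuration.

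First I would inspect Algorithm~\ref{algo:discovery} and observe that a correct node $p_i$ writes to a channel in only two places: line~\ref{ln:sendNeighborhood}, where it broadcasts $\langle i, Neighborhood_i, \emptyset\rangle$, and line~\ref{c:VisitedPathUpj}, where it forwards an arriving message after appending the immediate sender's id to the path. The main-loop broadcasts are valid by construction, since the source is the correct node itself, the second field equals $indices(N_i)$, and the empty path is trivially correct. In particular, $InformedTopology_i$ is read only by $ComputeResults$, $RemoveContradictions$, and $RemoveGarbage$, and is never used as a reservoir from which messages are (re)injected into the channels; hence initial garbage in that internal queue is irrelevant to the statement we are proving.

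Second I would rule out the Byzantine nodes as an eventual source of invalid messages on correct paths. By induction on the number of forwarding hops performed after the initial configuration, any message received by a correct node whose forwarding history contains at least one Byzantine node carries a Byzantine identifier in its VisitedPath: the first correct node to receive it got it either directly from a Byzantine neighbor (so the appended id is Byzantine) or from a correct neighbor whose incoming copy already carried such an id. Consequently, a non-valid message that is currently relayed on a correct path cannot be a descendant of a Byzantine injection performed after the initial configuration; it must descend from garbage present in some $queue_{j,i}$ at time $0$.

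Third I would bound the lifespan of such initial garbage. Each channel holds at most $capacity$ messages, so the garbage sits in $\BigO(|E|\cdot capacity)$ slots and its front is drained by the receiving endpoints within $\BigO(capacity)$ rounds. Each forwarding by a correct node strictly increases the path length (line~\ref{c:VisitedPathUpj} appends the immediate sender's id and only sends to neighbors not yet in the path), so paths grow monotonically and are bounded by $n=|P|$; thus the descendant tree of each garbage copy has depth at most $n$, and each level advances by one asynchronous round (absorbing the $capacity$ factor into the asymptotic bound as allowed by the remark preceding the lemma). Combining the three observations, from round $\BigO(|N|)$ onward the only messages traveling on correct paths originate from line~\ref{ln:sendNeighborhood} of some correct node and were forwarded, if at all, only by correct nodes that faithfully preserve the source and neighborhood fields while appending their own (correct) id to the path; these messages are valid. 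I expect the last step to be the main obstacle, since one must argue carefully that garbage is never ``resurrected'' via $InformedTopology$ or replayed through $MoveToHead$; for this I would explicitly appeal to the fact that the only sender of forwarded messages is the Receive handler and that the FIFO semantics of the channel queues together with the bounded $capacity$ guarantee monotone progress.
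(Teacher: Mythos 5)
Your proposal is correct and follows essentially the same route as the paper's proof: fresh broadcasts from correct nodes are valid by construction, Byzantine-tainted messages are excluded because the first correct recipient records a Byzantine identifier in the visited path, and initial channel garbage dies out because each forwarding hop by a correct node extends the loop-free visited path, which is bounded by $|N|$. Your version is somewhat more systematic than the paper's (explicitly classifying message sources, noting that $InformedTopology$ is never re-injected into the channels, and accounting for the $capacity$ factor), but the key mechanism is identical.
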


\begin{proof}
Let $c\in R$ be the starting configuration. Suppose that $c$ includes an invalid message, $m = \langle \ell, Neighborhood_{\ell}, VisitedPath_{\ell} \rangle$, in transit between correct nodes. The lemma is obviously correct for the case that $m$ is relayed by Byzantine nodes during the first $\BigO(|N|)$ asynchronous rounds of $R$. Therefore, we consider only the correct paths, $path$, over which $m$ is relayed during the first $\BigO(|N|)$ asynchronous rounds of $R$. We show that, within $\BigO(|N|)$ asynchronous rounds, no correct node in $path$ relays $m$.

Let $p_j, p_i \in path$ be correct neighbors on the correct path. Suppose that in $c$, message $m$ is in transit from $p_j$ to $p_i$. Upon the arrival of message $m$ to $p_i$ (line~\ref{ln:Receive}), $p_i$ sends $m_i = \langle \ell, Neighborhood_{\ell}, VisitedPath_{\ell} \cup \{ j \} \rangle$ to any neighbor $p_k \in path$ on the path for which $p_k \in N_i \wedge k \not \in VisitedPath_{\ell}$, see line~\ref{c:VisitedPathUpj}.

Node $p_i$ adds $p_j$'s identifier to $m$'s visited path $VisitedPath_\ell$, see line~\ref{c:VisitedPathUpj}. The same argument holds for any correct neighbors, $p^\prime_j, p^\prime_j \in path$ when $p_j$ sends message $m^\prime_j$ to the next node in $path$, node $p^\prime_i$. Therefore, within $|path \setminus VisitedPath_{\ell}|$ asynchronous rounds, it holds that $N^\prime_i \cap (path \setminus VisitedPath_{\ell}) = \{ p^\prime_j, p^\prime_i \}$.

Note that $p^\prime_i$ makes sure that $VisitedPath^\prime_{\ell}$ does not encode loops, i.e., $p_k \not \in VisitedPath^\prime_{\ell}$, see line~\ref{c:VisitedPathUpj}. Therefore, node $p^\prime_i$ does not relay message $m^\prime$ to $p_k$.
\end{proof}

Definition~\ref{d:IncorruptQueue} considers queues that their recent valid messages encode at least $f+1$ vertex-disjoint paths. Moreover, the invalid ones encode at most $f$ such paths.

 \begin{definition}[Valid queue]
\label{d:IncorruptQueue}
Let $p_i, p_k \in C$ be two correct nodes. We say that $p_i$'s queue, $InformedTopology_i$, is {\em valid (with respect to $p_k$)} whenever there is a prefix, $ValidInformation_{i,k}$, of messages $m_k$ in the queue $InformedTopology_i$, such that: (1) there is a subset, $Valid = \{ m_{\ell} = \langle k, Neighborhood_{k}, VisitedPath_{\ell}\rangle : m_{\ell} \textmd{ is valid} \} \subseteq ValidInformation_{i,k}$, for which the set $\{ VisitedPath_{\ell} \}$ encodes at least $f+1$ vertex-disjoint paths, and (2) the set, $Invalid = \{ m_{\ell} = \langle k, Neighborhood_{k}, VisitedPath_{\ell}\rangle : m_{\ell} \textmd{ is invalid}\} \subseteq ValidInformation_{i,k}$, for which the set $\{ VisitedPath_{\ell} \}$ encodes at most $f$ vertex-disjoint paths.
 \end{definition}

Claim~\ref{c:CorrectRelaying} shows that, within $\BigO(|C|)$ asynchronous rounds, correct paths propagate valid messages.

\begin{claim}
\label{c:CorrectRelaying}
Let $path \subseteq C$ be a correct path from $p_i$ to $p_k$. Suppose that $m_i = \langle i, N_{i}, \emptyset \rangle$ is a (valid) message that $p_i$ sends, see line~\ref{ln:sendNeighborhood}. Within $\BigO(|path|)$ asynchronous rounds, message $m_i$ is relayed on $path$, and arrives at $p_k$ as $m_i^\prime = \langle i, N_{i}, path \rangle$. Namely, $path$ is $m_i^\prime$'s visited path.
\end{claim}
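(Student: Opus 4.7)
The plan is to proceed by induction on the position $t$ of nodes along $path = (p_{j_0}, p_{j_1}, \ldots, p_{j_{|path|-1}})$, where $p_{j_0}=p_i$ and $p_{j_{|path|-1}}=p_k$. The induction hypothesis at step $t$ is: within $O(t)$ asynchronous rounds (ignoring the multiplicative $capacity$ factor absorbed by FIFO drain, just as is done in Lemma~\ref{l:ClearChannel}), node $p_{j_{t-1}}$ sends the message $\langle i, N_i, \{j_0, \ldots, j_{t-2}\} \rangle$ on the correct-path edge $(p_{j_{t-1}}, p_{j_t})$, and $p_{j_t}$ receives it. For the base case $t=1$, within one iteration of the do-forever loop (line~\ref{ln:sendNeighborhood}), $p_i$ sends $\langle i, N_i, \emptyset\rangle$ to every neighbor, and since $(p_i,p_{j_1}) \in E$ is a correct link, within one additional asynchronous round $p_{j_1}$ receives $\langle i, N_i, \emptyset \rangle$.

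For the inductive step, I would invoke the receive handler at $p_{j_t}$ (line~\ref{ln:Receive}). It first calls $Insert$ on the message with the updated visited path $\{j_0, \ldots, j_{t-1}\}$; the guard $p_i \in N \wedge N_i \subseteq indices(N) \wedge \{j_0,\ldots,j_{t-1}\} \subseteq indices(N)$ is satisfied because $p_i$ is correct (hence $N_i \subseteq indices(N)$) and all $p_{j_s}$ on the growing visited path are correct nodes and therefore lie in $N$. The forwarding loop (line~\ref{ln:Send}) is then executed unconditionally on $p_{j_t}$'s local state: it transmits $\langle i, N_i, \{j_0,\ldots,j_{t-1}\} \rangle$ to every neighbor whose identifier is not already in the visited path. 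Since $path$ is a simple correct path, the next vertex $p_{j_{t+1}}$ is a neighbor of $p_{j_t}$ whose identifier $j_{t+1}$ is distinct from $j_0, \ldots, j_{t-1}$; thus $p_{j_t}$ does forward the message along the next edge of $path$, advancing the induction by one hop in $O(1)$ additional asynchronous rounds.

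The main obstacle is ruling out interference along the way. This is mild here: every $p_{j_t} \in path$ is by hypothesis correct, so it will neither alter nor suppress the message; Byzantine nodes lie outside $path$ and therefore cannot delete messages from its FIFO channels; and the only ``drop'' path in the receive handler is the $Insert$ filter, which (as argued above) is always passed on this execution since the source and every accumulated identifier correspond to correct, existing nodes. Composing the induction over the $|path|-1$ hops yields, within $O(|path|)$ asynchronous rounds, a receive event at $p_k$ of the message with visited path equal to $path$ (modulo the convention on whether $p_k$'s own identifier is appended by its receive handler before or after ``arrival''), which is precisely the form $m_i^\prime = \langle i, N_i, path\rangle$ claimed.
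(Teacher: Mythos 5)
Your proof is correct and follows essentially the same route as the paper's: an induction along the hops of $path$, showing that each correct node forwards the message to its successor within one asynchronous round because the successor's identifier is not yet in the (loop-free) visited path, yielding delivery at $p_k$ within $\BigO(|path|)$ rounds. The extra care you take with the $Insert$ guard and the off-by-one convention on $p_k$'s identifier is harmless detail the paper glosses over, but the core argument is identical.
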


%\noindent {\bf Claim~\ref{c:CorrectRelaying}}
%{\em Let $path \subseteq C$ be a correct path from $p_i$ to $p_k$. Suppose that $m_i = \langle i, N_{i}, \emptyset \rangle$ is a (valid) message that $p_i$ sends, see line~\ref{ln:sendNeighborhood}. Within $\BigO(|path|)$ asynchronous rounds, message $m_i$ is relayed on $path$, and arrives at $p_k$ as $m_i^\prime = \langle i, N_{i}, path \rangle$. Namely, $path$ is $m_i^\prime$'s visited path.}

\begin{proofClaim}
Let $c \in R$ be the first configuration that follows the start of $m_i$'s propagation in $path$. I.e., $c$ is the configuration that immediately follows the step in which node $p_i$ sends $m_i$ by executing line~\ref{ln:sendNeighborhood}. Let $p_r, p_j \in path$ be two correct neighbors on the path. Without the loss of generality, suppose that node $p_i$ sends message $m_i$ directly to node $p_r$, i.e., in $c$, node $p_r$ is just about to receive $m_i$. The proof arguments hold also when assuming that $p_j$ sends message $m_j=\langle i, N_{i}, \{ r \} \rangle$ to the next node in $path$. Thus, generality is not lost.

We show that, within one asynchronous round, $p_r$ sends $m_r$ to $p_j$. Upon the arrival of message $m_i$ to $p_r$ (line~\ref{ln:Receive}), node $p_r$ sends the message $m_r$ to any neighbor, such as $p_j$, for which $p_j \in N_r \wedge r \not \in VisitedPath_i =\emptyset$, see line~\ref{c:VisitedPathUpj}. Since the same argument holds when $p_j$ sends $m_j$ to the next node in $path$, we have that within $|path|$ asynchronous rounds, $m_i^\prime$ is delivered to node $p_k$.

%
%Any correct node, when receiving a message from a neighbor, performs line~\ref{c:VisitedPathUpj} correctly. That is, it adds the previous node on the path to the visited path and sends it to all neighbors , one of which is the next node on the path. All this happens within one asynchronous rounds. Thus, after $\BigO(|C|)$, the message is received at the destination with a correct visited path.
\end{proofClaim}

Lemma~\ref{l:IncorruptQueue} shows that queues get to become valid.

 \begin{lemma}[Eventually valid queues]
\label{l:IncorruptQueue}
Let $R$ be a fair execution of Algorithm~\ref{algo:discovery} that starts in an arbitrary configuration and $p_i, p_k \in C$ be any pair of correct nodes. The system reaches a configuration in which the queue, $InformedTopology_i$, is valid (with respect to $p_k$), within $\BigO(|N|)$ asynchronous rounds.
 \end{lemma}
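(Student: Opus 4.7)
The plan is to combine Lemma~\ref{l:ClearChannel} and Claim~\ref{c:CorrectRelaying} with the connectivity hypothesis that between any two correct nodes there exist $2f+1$ vertex-disjoint paths. First I would wait the $\BigO(|N|)$ rounds granted by Lemma~\ref{l:ClearChannel}, so that from that point on every message a correct node receives from a correct neighbor is a valid message in the sense of Definition~\ref{d:IncorruptMessages}. Next I would exploit the connectivity assumption to pick $f+1$ fully correct vertex-disjoint paths $\pi_1, \ldots, \pi_{f+1}$ from $p_k$ to $p_i$; this is possible because at most $f$ of the $2f+1$ disjoint paths can contain a Byzantine node (a single Byzantine node lies on at most one of them by vertex-disjointness). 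Since $p_k$ re-executes line~\ref{ln:sendNeighborhood} in every iteration of its do-forever loop, it repeatedly launches the valid message $\langle k, N_k, \emptyset \rangle$ along each of its incident edges.

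By Claim~\ref{c:CorrectRelaying}, within $\BigO(|N|)$ further asynchronous rounds, each path $\pi_j$ delivers to $p_i$ a valid message $m_j = \langle k, N_k, \pi_j \rangle$. Upon reception, $p_i$'s $Insert$ procedure either inserts $m_j$ at the head of $InformedTopology_i$ (if $m_j$ is new) or executes $MoveToHead$ on the existing copy; in either case $m_j$ lies at the head of the queue at the instant it is processed, and because $p_k$ keeps re-sending, each $m_j$ is re-promoted to the head again and again. I would then take $ValidInformation_{i,k}$ to be the shortest prefix of $InformedTopology_i$ that contains, for each $j \in \{1, \ldots, f+1\}$, the most recent occurrence of $m_j$. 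By construction this prefix contains $m_1, \ldots, m_{f+1}$, whose visited paths are pairwise vertex-disjoint, discharging condition~(1) of Definition~\ref{d:IncorruptQueue}.

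The main obstacle is to control the invalid messages within this prefix (condition~(2)). For this I would argue that every message appearing in the prefix was inserted or moved to the head no earlier than the oldest of the $m_j$'s, and was therefore received by $p_i$ after Lemma~\ref{l:ClearChannel} has taken effect; older garbage from the initial configuration must have drifted strictly below the latest of the $m_j$'s. An invalid message in the prefix with claimed source $p_k$ cannot have been emitted by the correct node $p_k$ (which only sends its true neighborhood with an empty visited path) and, by Lemma~\ref{l:ClearChannel}, cannot have been delivered along a fully correct path; hence its $VisitedPath$ field must contain at least one Byzantine identifier, since every correct relay dutifully appends the identity of its correct predecessor, leaving the Byzantine node that fabricated or corrupted the tuple on the recorded path. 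With only $f$ Byzantine nodes present in the network, the set of invalid messages in the prefix therefore admits at most $f$ mutually vertex-disjoint paths, yielding condition~(2). Summing the delay of Lemma~\ref{l:ClearChannel}, the first re-flooding by $p_k$, and the relaying time of Claim~\ref{c:CorrectRelaying} gives the claimed $\BigO(|N|)$ bound.
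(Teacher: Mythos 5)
Your proposal is correct and follows essentially the same route as the paper: wait for Lemma~\ref{l:ClearChannel} to take effect, use the $2f+1$-connectivity and Claim~\ref{c:CorrectRelaying} to get $f+1$ fresh valid messages over correct vertex-disjoint paths for condition~(1), and observe that any invalid message received afterwards must carry a Byzantine identifier in its visited path (so at most $f$ of them can be vertex-disjoint) for condition~(2). The only cosmetic difference is how the prefix $ValidInformation_{i,k}$ is delimited --- the paper fixes a barrier message at the queue head in the configuration of Lemma~\ref{l:ClearChannel}, whereas you take the shortest prefix containing the fresh valid messages --- and both choices work.
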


\begin{proof}
Let $c\in R$ be a configuration achieved in Lemma~\ref{l:ClearChannel} within $\BigO(|N|)$ asynchronous rounds. We show that within
$\BigO(|N|)$ asynchronous rounds after $c$, the system reaches a configuration in which $InformedTopology_i$, is {\em valid (with respect to $p_k$)}, see Definition~\ref{d:IncorruptQueue}.

In configuration $c$, all messages in transit on correct paths are valid, see Lemma~\ref{l:ClearChannel}.
Thus, the only messages entering $InformedTopology_i$ are either valid or have passed through Byzantine nodes.  Denote $m_{barrier}$ to be the top message the queue (i.e., the last message entered into the queue) $InformedTopology_i$ in configuration $c$. Moreover, $ValidInformation_{i,k}$ includes all the messages in $InformedTopology_i$, that are between the queue's head and $m_{barrier}$.

We show that condition $(1)$ of Definition~\ref{d:IncorruptQueue} holds. There are $2f+1$ vertex-disjoint paths between $p_i$ and $p_k$. At most $f$ nodes are Byzantine and thus, there are at least $f+1$ vertex-disjoint paths between $p_i$ and $p_k$ that are correct. By Claim~\ref{c:CorrectRelaying} within $\BigO(|C|)$ asynchronous rounds, a valid message, $m_k$, is received on all $f+1$ (correct) vertex-disjoint paths. Message $m_k$ is inserted to $InformedTopology_i$ after configuration $c$. Therefore, $m_k$ is in front of $m_{barrier}$. Hence, the set $Valid = \{ m_{\ell} = \langle \ell, Neighborhood_{\ell}, VisitedPath_{\ell}\rangle : m_{\ell} \textmd{ is valid} \} \subseteq ValidInformation_{i,k}$ contains at least $f+1$ valid messages whose respective visited paths, $VisitedPath_{\ell}$, are vertex-disjoint.

We show that condition $(2)$ of Definition~\ref{d:IncorruptQueue} holds. Any invalid messages, $m_k$, that is sent after configuration $c$, must go through a Byzantine node, see Lemma~\ref{l:ClearChannel}.

\begin{claim}
\label{l:PassThroughByz}
Suppose that message $m$ is relayed through a Byzantine node after configuration $c$, then in any following configuration, while $m$ is still in transit, there is a Byzantine node in the visitedPath.
\end{claim}

\begin{proofClaim}
Observe the first correct node $p_k$ after the last Byzantine node $b$ on $m$'s path. $p_k$ is correct, thus it inserts $b$ to the visited path. $b$ is the last on the path and so the visited path
must contain it until end of transit or passing through a different Byzantine.
\end{proofClaim}

Each such Byzantine node is recorded in the message path, see Claim~\ref{l:PassThroughByz}. Since there are at most $f$ Byzantine nodes, there could be at most $f$ such messages with vertex-disjoint paths. This completes the proof condition $(2)$ and the lemma.
\end{proof}

Lemma~\ref{d:MessageConfirmation} shows that correct information gets confirmed, and requires Definition~\ref{d:MessageConfirmation}.

 \begin{definition}[Message confirmation]
\label{d:MessageConfirmation}
We say that message $m_i = \langle k, Neighborhood_k, VisitedPath_{k_i} \rangle$ is {\em confirmed (by node $p_i$)} when $Neighborhood_k \subseteq ConfirmedTopology_i$.
 \end{definition}

 \begin{lemma}[Eventually confirmed messages]
\label{l:MessageConfirmation}
Let $R$ be a fair execution of Algorithm~\ref{algo:discovery} that starts in an arbitrary configuration and $p_i, p_k \in C$ be any pair of correct nodes. Within $\BigO(|N|)$ asynchronous rounds, the system reaches a configuration after which the fact that message $m_i = \langle k, Neighborhood_k, VisitedPath_{k_i} \rangle$ is confirmed, implies that $Neighborhood_k = indices(N_\ell)$.
 \end{lemma}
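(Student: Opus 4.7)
The plan is to combine Lemma~\ref{l:IncorruptQueue} with a direct analysis of $ComputeResults$ to show that, for a correct source $p_k$, the only neighborhood claim that can survive the $f+1$ vertex-disjoint paths test is the true one, $indices(N_k)$, and hence the only claim that can populate $Result_i[k]$ and thereby be confirmable through $ConfirmedTopology_i$.

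First, I would invoke Lemma~\ref{l:IncorruptQueue} to obtain, within $\BigO(|N|)$ asynchronous rounds, a configuration in which $InformedTopology_i$ is valid with respect to $p_k$. By Definition~\ref{d:IncorruptQueue} there is then a prefix $ValidInformation_{i,k}$ of the queue whose valid messages from $p_k$ supply at least $f+1$ vertex-disjoint visited paths, while its invalid messages from $p_k$ supply at most $f$ vertex-disjoint visited paths. By Definition~\ref{d:IncorruptMessages}, every valid message from $p_k$ reports precisely $Neighborhood_k = indices(N_k)$.

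Next I would examine $ComputeResults$. Iterating over $InformedTopology_i$ in queue order, each message $\langle k, Neighborhood_\ell, VisitedPath_\ell\rangle$ is bucketed into $opinion[Neighborhood_\ell]$, and a neighborhood $Neighborhood_\ell$ is installed as $Result[k]$ only once $NodeDisjointPaths(opinion[Neighborhood_\ell])$ holds. Since every valid $p_k$-message reports $indices(N_k)$, any alternative bucket $Neighborhood_\ell \neq indices(N_k)$ is populated exclusively by invalid messages from $p_k$; by the valid-queue bound, its visited paths provide at most $f$ vertex-disjoint routes, which is strictly less than the $f+1$ threshold. Therefore no false bucket can ever trigger the installation, and the unique neighborhood that can be assigned to $Result_i[k]$ is the true $indices(N_k)$. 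The subsequent $RemoveContradictions$ and $RemoveGarbage$ steps can only decrease $Count[k]$ or erase $Result[k]$ entirely; they cannot change a correct entry into an incorrect one.

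The main obstacle is handling the residue of the arbitrary initial configuration that may linger in $ConfirmedTopology_i$, together with entries $Result_i[k']$ for non-existent or Byzantine sources $p_{k'}$. To close the argument I would show that once $Result_i[j]=indices(N_j)$ holds for every correct $p_j$ on a correct path carrying $m_i$, the check $PathContradictsNeighborhood$ in $RemoveContradictions$ (line~\ref{ln:removeContradictions}) rejects any message whose visited path crosses an edge absent from the true neighborhood of some correct $p_j$; combined with Lemma~\ref{l:ClearChannel}, this drives the corresponding $Count$ below $f+1$ and empties the spurious $Result$ entries, so that the edges accumulated into $ConfirmedTopology_i$ are consistent with $G_c$. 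Consequently, whenever a message $m_i = \langle k, Neighborhood_k, VisitedPath_{k_i}\rangle$ satisfies $Neighborhood_k \subseteq ConfirmedTopology_i$, its reported neighborhood coincides with $indices(N_k)$, establishing the claim within $\BigO(|N|)$ asynchronous rounds.
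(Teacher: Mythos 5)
Your proof is correct for the implication actually stated in the lemma and follows essentially the same route as the paper's: both invoke Lemma~\ref{l:IncorruptQueue} to obtain a valid queue, and both then argue inside $ComputeResults()$ that for a correct source $p_k$ any bucket other than $indices(N_k)$ is fed only by invalid messages, which by condition (2) of Definition~\ref{d:IncorruptQueue} supply at most $f$ vertex-disjoint paths and therefore can never pass the test of line~\ref{ln:found}; the paper merely phrases this as a proof by contradiction where you argue it directly. The one substantive difference is your treatment of $RemoveContradictions()$: you only observe that it cannot turn a correct entry into an incorrect one, whereas the paper additionally proves that no message with a correct visited path can satisfy $PathContradictsNeighborhood$ in line~\ref{ln:removeContradictions}, so that $Count[k]$ remains above $f$ and the correct entry $Result_i[k]=indices(N_k)$ actually \emph{survives} and is merged into $ConfirmedTopology_i$. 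That positive direction is not required by the literal implication you were asked to prove, but it is what the lemma's name promises and what the proof of Theorem~\ref{l:convenance} later cites (``$N_k$ is confirmed in $c$''); if your lemma is to be usable there, you should add the paper's argument that correct visited paths never trigger the contradiction test. Conversely, your closing paragraph about purging spurious $Result$ entries for non-existent sources duplicates work the paper defers to Lemma~\ref{l:NoFake} and is not needed for this lemma.
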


\begin{proof}
Let $c \in R$ be the first configuration in which $InformedTopology_i$ is a valid queue and node $p_i$ completes a full iteration of the do forever loop that starts in line~\ref{ln:DoForever}. By Lemma~\ref{l:IncorruptQueue}, the system reaches $c$ within $\BigO(|N|)$ asynchronous rounds.

We know that in configuration $c$, the array $Result_i$ satisfies that $Result_i[k] = indices(N_\ell)$. We go through the computation of $Result$ in lines~\ref{ln:ComputeResults} to~\ref{ln:RemoveGarbage}.

{\bf $\bullet$~~~ $ComputeResults()$,} {\rmfamily line}~\ref{ln:ComputeResults}.~~~
%
%Let $Res_i = ComputeResults()$ (line~\ref{ln:ComputeResults}). We show that $Res_i[k] = indices(N_\ell)$. Assume to the contrary that $Res_i[k] = indices(N^{\prime}_\ell) \not = indices(N_\ell)$.
%
%Since $Res_i[k]$ is $ComputeResults()$'s return value, it must be that the computation within the function gave incorrect results.
%
Let $Res_i[k] = indices(N^{\prime}_\ell)$ be $ComputeResults()$'s return value with respect to node $p_k$. We show that $Res_i[k] = indices(N_\ell)$. Moreover, we show that the neighborhood that will be found will be that which is represented in $Valid = \{ m_{\ell} = \langle k, Neighborhood_{k}, VisitedPath_{\ell}\rangle : m_{\ell} \textmd{ is valid} \} \subseteq ValidInformation_{i,k}$.

We recall that the set $\{ VisitedPath_{\ell} \}$ encodes at least $f+1$ disjoint paths. Also in the prefix $ValidInformation_{i,k}$ one can not find
$f+1$ invalid messages with vertex-disjoint messages; See Definition~\ref{d:IncorruptQueue}.

The function must choose the message containing the neighborhood $Neighborhood_{k}$. Otherwise, we have chosen a different neighborhood for $k$, say $Neighborhood_{k}^{\prime} \not = Neighborhood_{k} = indices(N_k)$. That is, at the time of checking line~\ref{ln:found} with neighborhood
$Neighborhood_{\ell} = Neighborhood_{k}^{\prime}$, there were at least $f+1$ vertex-disjoint paths in $opinion[Neighborhood_{\ell}]$. This is in contradiction to condition $(2)$ of Definition~\ref{d:IncorruptQueue}. Moreover in line~\ref{ln:count}, it holds $Count[k] > f+1$, since at least all the correct paths were counted.

{\bf $\bullet$~~~ $RemoveContradictions()$}, {\rmfamily line}~\ref{ln:RemoveContradictions}.~~~
Let $Res_i = ComputeResults()$ and $ResRemoveContradictions_i = RemoveContradictions(Res_i)$ (line~\ref{ln:RemoveContradictions}). We show that $ResRemoveContradictions_i[r] = indices(N_r)$. The function {\bf $RemoveContradictions()$} modifies $Res_i[r]$ only in line~\ref{ln:resellgetsempty} by nullifying it whenever $Count[r] \leq f$. We demonstrate that, for any correct path $VisitedPath_k$, there exists no $p_\ell$ for which $PathContradictsNeighborhood(p_{\ell}, Res_i[\ell], VisitedPath_k)$ $=$ ${\bf true}$, which is the condition in line~\ref{ln:removeContradictions}.

We explain that there is no node $p_{\ell}$ and a contradicting edge $(p_j, p_{\ell})$ with the set $Res_i[\ell]$.
By the assumption that $VisitedPath_k$ is correct and that node $p_{\ell} \in VisitedPath_k$, we have that $p_{\ell} \in C$ is correct. Thus $Res_i[\ell] = indices(N_\ell)$, see previous item of this claim on $ComputeResults()$. $VisitedPath_k$ is correct, and therefore $( p_j, p_{\ell})$ must be in $VisitedPath_k$.

{\bf $\bullet$~~~ $RemoveGarbage()$,} {\rmfamily line}~\ref{ln:RemoveGarbage}.~~~
This procedure does not modify $Res_i = RemoveContradictions(ComputeResults())$.
We have shown that $Result_i[k] = indices(N_k)$. Thus, only the correct neighborhood is confirmed
for every correct node $p_k$.
\end{proof}

Lemma~\ref{l:NoFake} shows that eventually there are no fake nodes.

 \begin{lemma}[Eventually no fake nodes]
\label{l:NoFake}
Let $R$ be a fair execution of Algorithm~\ref{algo:discovery} that starts in an arbitrary configuration, $p_j \in N$ be any node, and $p_{\ell} \in P \setminus (N)$ be a node that is not included in the communication graph, $G$. Within $\BigO(|N|)$ asynchronous rounds, the system reaches a configuration after which $(p_j, p_\ell) \not \in ConfirmedTopology_i$
 \end{lemma}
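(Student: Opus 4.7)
My strategy is to reduce the statement to the earlier stabilization lemmas by observing that line~\ref{ln:ConTopGetsConTop} places the directed edge $(p_j,p_\ell)$ into $ConfirmedTopology_i$ only when $p_\ell \in Result_i[k]$ for some $p_k \in P$, and then to argue that after $\BigO(|N|)$ asynchronous rounds no such $k$ exists. Concretely, I invoke Lemmas~\ref{l:ClearChannel}, \ref{l:IncorruptQueue} and~\ref{l:MessageConfirmation} to reach, within $\BigO(|N|)$ rounds, a configuration in which (i) every invalid message on a correct path has been flushed; (ii) for every source $p_k$ the prefix $ValidInformation_{i,k}$ of $InformedTopology_i$ contains at most $f$ mutually vertex-disjoint invalid messages with source $p_k$; and (iii) for every correct $p_k$, $Result_i[k]=indices(N_k)$.

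I then perform a case analysis on the node $p_k$ that could contribute the offending edge. If $p_k$ is correct, then by (iii) $Result_i[k]=indices(N_k)\subseteq indices(N)$, which excludes $p_\ell\in P\setminus N$. Otherwise $p_k$ is either Byzantine or coincides with the fake node $p_\ell$ itself. In both situations Definition~\ref{d:IncorruptMessages}(1) forces every message in $InformedTopology_i$ with source $p_k$ to be invalid: a correct node never impersonates another identity, and a fake identity has no correct node emitting it. For $ComputeResults$ to assign any neighborhood to $Result_i[k]$, line~\ref{ln:found} requires $f+1$ mutually vertex-disjoint paths in $opinion[\cdot]$ supporting a common claimed neighborhood; by (ii) at most $f$ such paths survive, so $Result_i[k]$ remains empty and in particular cannot contain $p_\ell$. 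Consequently the union assembled at line~\ref{ln:ConTopGetsConTop} never includes an edge incident to $p_\ell$.

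The main obstacle is justifying item (ii) uniformly for a Byzantine source $p_k$ that could conceivably inject the same claim along many internally disjoint routes through its correct neighbors. I plan to reuse the pigeonhole argument from Lemma~\ref{l:IncorruptQueue} together with Claim~\ref{l:PassThroughByz}: after stabilization every invalid message retained in $ValidInformation_{i,k}$ carries, in its recorded $VisitedPath$, at least one Byzantine identifier---appended by the first correct forwarder via the Insert call following the \op{Receive} handler (line~\ref{c:VisitedPathUpj}). Two paths that are vertex-disjoint cannot share that Byzantine vertex, and since $|B|=f$, at most $f$ mutually vertex-disjoint invalid supporters can coexist; any residual entries supported only by messages pushed below the top $f+1$ confirmers are purged by $RemoveGarbage$ at line~\ref{ln:RemoveGarbage}. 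Combining the two cases yields $p_\ell \notin Result_i[k]$ for every $p_k\in P$ after $\BigO(|N|)$ rounds, which establishes the lemma.
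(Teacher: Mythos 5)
Your reduction to showing $p_\ell \notin Result_i[k]$ for every $p_k$, and your handling of the case where the reporting node is correct, are fine. The gap is in the other case: you claim that for a Byzantine or fake source $p_k$, line~\ref{ln:found} can never fire because at most $f$ mutually vertex-disjoint invalid supporters survive, each carrying a Byzantine identifier in its recorded path. That bound is not available here. Condition (2) of Definition~\ref{d:IncorruptQueue} (Lemma~\ref{l:IncorruptQueue}) is proved only for \emph{correct} sources $p_k \in C$, and only for the prefix $ValidInformation_{i,k}$ of messages received \emph{after} the channels have been cleaned; likewise Claim~\ref{l:PassThroughByz} speaks only of messages relayed through a Byzantine node after that point. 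Neither statement constrains the garbage tuples already sitting in $InformedTopology_i$ in the arbitrary initial configuration: such a tuple can name the fake source $p_\ell$ and carry a $VisitedPath$ containing no Byzantine identifier at all, because it was never actually relayed by anyone. Since $ComputeResults()$ iterates over the whole queue and not just the fresh prefix, $f$ Byzantine-originated supporters plus one such garbage tuple can supply $f+1$ vertex-disjoint paths and populate $Result_i[\ell]$ --- this is exactly the counterexample the paper spells out in the prose preceding Definition~\ref{d:ContradictingEdge}. Your fallback to $RemoveGarbage$ also fails: it only deletes messages that lie \emph{after} the confirming set $opinion[k][Result[k]]$ in the queue, and the offending garbage tuple is itself a member of that confirming set (moreover all of its components are subsets of $P$, so the other deletion test does not catch it either).

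The paper's proof instead rests on $RemoveContradictions()$, which your argument never invokes. Once Lemma~\ref{l:MessageConfirmation} guarantees $Result_i[r] = indices(N_r)$ for every correct $p_r$, any supporting path for the fake node $p_\ell$ that contains no Byzantine node must, somewhere between the non-existent source and the correct endpoint $p_i$, contain an edge joining a correct node $p_r$ to a node outside $N$; that edge contradicts the confirmed $Result_i[r]$, so $PathContradictsNeighborhood$ fires in line~\ref{ln:removeContradictions} and $Count_i[\ell]$ is decremented for that supporter. After all Byzantine-free supporters are discounted in this way, every surviving supporter contains a Byzantine node, so at most $f$ of them are vertex-disjoint, $Count_i[\ell] \leq f$, and line~\ref{ln:resellgetsempty} nullifies $Result_i[\ell]$. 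It is this post-hoc contradiction check --- not any bound on what $ComputeResults()$ initially accepts --- that eliminates fake nodes, and your proof is missing it.
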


\begin{proof}
Let $c \in R$ be the configuration reached within $\BigO(|N|)$ asynchronous rounds according to Lemma~\ref{l:MessageConfirmation}. For any correct node, $p_i \in C$, we show that in $c$, the execution of $RemoveContradictions()$ results in $Count_i[\ell] \leq f$ and nullifies $Result_i[\ell]$.

We start by showing that for every path $p$ that relays a message which encodes the set $Result_i[\ell]$, and does not contain Byzantine nodes, a contradiction is found in $RemoveContradictions()$. Namely, the if conditions of line~\ref{ln:removeContradictions} holds.

Note that, $p$ may not be a correct path even though it contains no Byzantine nodes. For example $p$ may contain nodes $p_z$ that are not even in the communication graph, i.e., $p_z \in P \setminus (N)$.

Let $p_r \in N$ be the first correct node on path $p$. Such a node exists, because $p_i$ is correct and on the path $p$. Since $p_r$ is correct, after the execution of $ComputeResults()$, we have that $p_r$'s neighborhood, $N_r$, is encoded in $Result_i[r]$, see Lemma~\ref{l:MessageConfirmation}.

Denote the last edge in the path $(p_r, p_s)$, where $p_s \in P \setminus (N)$. Note that node $p_s$ is not a node in the system and since $Result_i[r]$ encodes $N_r$'s neighborhood, we have that $p_s  \not \in Result_i[r]$. Thus, the edge $(p_r, p_s)$ is contradicting with the set $Result_i[r]$. Namely, by the condition in line~\ref{ln:removeContradictions}, we have that line~\ref{ln:decreaseCount} must decrease $Count[\ell]$.

%In other words, whenever $Result_i[r]$ encodes an edge that contradicts n a correct path.

We note that immediately before the function $RemoveContradictions()$ returns, the integer $Count[\ell]$ may count only incorrect paths, which contain at least one Byzantine node. Since there are at most $f$ Byzantine nodes, $Count[\ell] \leq f$ as needed.
\end{proof}

Theorem~\ref{l:convenance} demonstrates the self-stabilization properties.

 \begin{theorem}[Self-stabilization]
\label{l:convenance}
Let $R$ be a fair execution of Algorithm~\ref{algo:discovery} that starts in an arbitrary configuration and $p_i \in C$ be a correct node. Within $\BigO(|N|)$ asynchronous rounds, the system reaches a safe configuration after which $p_i$'s output is always legal, see Definition~\ref{d:legal}.
 \end{theorem}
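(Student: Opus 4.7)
The strategy is to stitch together the four preceding lemmas. Each of Lemmas~\ref{l:ClearChannel}, \ref{l:IncorruptQueue}, \ref{l:MessageConfirmation}, and \ref{l:NoFake} reaches its conclusion within $\BigO(|N|)$ asynchronous rounds, so I would take the maximum of these convergence times to obtain a configuration $c^\star$, still reached in $\BigO(|N|)$ rounds, from which all four conclusions hold simultaneously and persist (each property is forward-closed along the execution). Before verifying legality, one technical point must be settled: the assignment in line~\ref{ln:ConTopGetsConTop} has to be interpreted as refreshing rather than strictly accumulating $ConfirmedTopology_i$, for otherwise arbitrary edges present in the initial value of $ConfirmedTopology_i$ would survive forever and preclude self-stabilization altogether.

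For the lower bounds $C \subseteq V_{out}$ and $(E \cap (C \times C)) \subseteq E_{out}$ of Definition~\ref{d:legal}, I would use the fact that each correct $p_k$ repeatedly broadcasts $\langle k, N_k, \emptyset \rangle$ in line~\ref{ln:sendNeighborhood}. Claim~\ref{c:CorrectRelaying}, together with the assumption of $2f+1$ vertex-disjoint paths (at least $f+1$ of which lie entirely in $C$), ensures that $InformedTopology_i$ contains $f+1$ vertex-disjoint valid copies of $p_k$'s report. By Lemma~\ref{l:MessageConfirmation}, $Result_i[k]$ stabilizes to $indices(N_k)$, so every edge $(p_j, p_k) \in E \cap (C \times C)$ appears in the output via $Result_i[j]$, and every correct node is in $V_{out}$.

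The upper bounds $V_{out} \subseteq C \cup B$ and $E_{out} \subseteq E$ are where the main obstacle lies. Lemma~\ref{l:NoFake} immediately rules out edges incident to any $p_\ell \in P \setminus N$, so the remaining case is a Byzantine source $p_b$ whose fake neighborhood falsely claims $(p_b, p_j)$ with $p_j \in N$ and $(p_b, p_j) \notin E$. When $p_j \in C$, Lemma~\ref{l:MessageConfirmation} fixes $Result_i[j]=indices(N_j)$, and any visited path that touches $p_j$ is caught by \emph{PathContradictsNeighborhood} at line~\ref{ln:removeContradictions}, eroding $Count[b]$ via lines~\ref{ln:decreaseCount}--\ref{ln:resellgetsempty}. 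The delicate sub-case is a set of vertex-disjoint $p_b$-originating reports whose paths avoid $p_j$ entirely, which I would rule out by combining condition~(2) of Definition~\ref{d:IncorruptQueue} (bounding Byzantine-relayed vertex-disjoint paths by $f$) with the observation that the first correct hop of every such path inserts a Byzantine neighbor of $p_b$ into the visited path, where a symmetric contradiction eventually fires. Once this bookkeeping is complete, the $\BigO(|N|)$ convergence time of the theorem is inherited directly from the component lemmas.
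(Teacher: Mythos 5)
Your overall architecture (chain Lemmas~\ref{l:ClearChannel}, \ref{l:IncorruptQueue}, \ref{l:MessageConfirmation} and~\ref{l:NoFake}, then verify the containments of Definition~\ref{d:legal}) matches the paper, and your treatment of $C \subseteq V_{out}$, of $(E\cap(C\times C)) \subseteq E_{out}$, and of the exclusion of nodes in $P\setminus N$ is essentially the paper's argument. Your side remark about line~\ref{ln:ConTopGetsConTop} is also a fair catch: the paper's proof silently ignores the fact that the union there would preserve arbitrary initial contents of $ConfirmedTopology_i$.

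The genuine gap is in your ``delicate sub-case.'' You try to establish the literal upper bound $E_{out} \subseteq E$ of Definition~\ref{d:legal} by arguing that a Byzantine $p_b$ claiming a fake edge $(p_b,p_j)$ with $p_j \in C$ and $(p_b,p_j)\notin E$ is eventually eroded via lines~\ref{ln:removeContradictions}--\ref{ln:resellgetsempty}. This fails: $PathContradictsNeighborhood$ tests only the \emph{visited path} of a message against the $Result$ array, never the claimed \emph{neighborhood}. The reports $\langle b,\, N_b\cup\{j\},\, \cdot\rangle$ originated by $p_b$ travel over genuine paths of the communication graph, so every edge on their visited paths is real and is present in the corresponding $Result$ entries; no contradiction ever fires, $Count[b]$ is never decremented on their account, and the directed fake edge $(p_b,p_j)$ survives in $Result_i[b]$ and hence in $ConfirmedTopology_i$. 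Your fallback --- that the first correct hop after $p_b$ records a Byzantine node in the visited path, where a ``symmetric contradiction eventually fires'' --- does not help either, because $p_b$'s correct neighbours genuinely list $p_b$ in their neighbourhoods, so those path edges are also consistent with $Result$. The paper resolves this not by ruling such edges out but by quietly weakening the target: its proof of this theorem establishes only $E_{out} \subseteq (E\cap(C\times C)) \cup (B\times N)$, not the $E_{out}\subseteq E$ stated in Definition~\ref{d:legal}, and it defers the handling of such one-sided (suspicious) directed edges to the end-to-end layer. So the containment you set out to prove is not one the algorithm actually provides, and the argument you offer for it is not sound.
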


\begin{proof}
The systems reaches configuration $c \in R$ of Lemma~\ref{l:MessageConfirmation} within $\BigO(|N|)$ asynchronous rounds. We show that $c$ is a safe configuration by showing that the output is legal, we must show that $ConfirmedTopology_i$ encodes a graph $G_{output}=(V_{out}, E_{out})$, such that: $(1)$ $C \subseteq V_{out}$, $(2)$ $(E \cap (C \times C)) \subseteq E_{out}$, $(3)$ $V_{out} \subseteq C \cup B \subseteq N$, and $(4)$ $E_{out} \subseteq (E \cap (C \times C)) \cup (B \times (N)) \subseteq P \times N$.

For every correct node $p_k \in C$, we have that $N_k$ is confirmed in $c$, see Lemma~\ref{l:MessageConfirmation}. Thus, $p_k \in V_{out}$ and condition $(1)$ holds.

Let $(p_j, p_k)$ be an edge in the communication graph between two correct nodes, we show $(p_j, p_k) \in E_{out}$. Since $p_j$ is correct, it is inserted to $ConfirmedTopology_i$, see Lemma~\ref{l:MessageConfirmation}. Thus, $(p_j, p_k) \in edges(N_j) \wedge  edges(N_j) \subseteq ConfirmedTopology_i$ in $c$, thus condition $(2)$ holds as well.

There is no $p_{\ell} \in P \setminus (N)$ and node $p_j \in N$, such that $(p_\ell p_j)\in ConfirmedTopology_i$, see Lemma~\ref{l:NoFake}. Thus, $V_{out} \subseteq C \cup B \subseteq N$ and $E_{out} \subseteq (E \cap (C \times C)) \cup (B \times (N)) \subseteq P \times N$. I.e., conditions $(3)$ and $(4)$ hold in $c$.
\end{proof}

\Section{Implementation proposals for $getDisjointPaths()$}
\label{s:ipf}
We consider the problem of relaying messages over the set $CorrectPaths$ when only $ConfirmedTopology$ is known, and propose three implementations to the function $getDisjointPaths()$. The value of $ConfirmedTopology$ is a set of directed edges $(p_i,p_j)$. An undirected edge is approved if both $(p_i,p_j)$ and $(p_j,p_i)$ appear in $ConfirmedTopology$. Other edges in $ConfirmedTopology$ are said to be suspicious. The arguments used here assume that the system is in a safe configuration with respect to Algorithm~\ref{algo:discovery}. For each of the proposed implementations, we show that $|Paths|$ is polynomial and $CorrectPaths \subseteq Paths$. Thus, the sender and the receiver can exchange messages using a polynomial number of paths and message send operations, because each path in $Paths$ is of linear length.

{\bf The case of constant $r$ and $\Delta$.}~~~~ The sender and the receiver exchange messages by using all possible paths between them. This is feasible only when considering $r$-neighborhoods, rather than the entire connected component, where the neighborhood radius, $r$, and the node degree $\Delta$ are constants.

{\bf The case of constant $f$.} ~~~~
This procedure entails sending a message on a path set, $Paths$, where $|Paths|$ is polynomial and $CorrectPaths \subseteq Paths$. %Namely, Moreover, these paths are sufficient to guarantee safe delivery of the message.

%We explain how the sender and the receiver select a set of vertex-disjoint paths, ${\cal P}(p_1,p_2, \ldots p_f) \subseteq Paths$, that contains $f+1$ correct vertex-disjoint paths.

%${\cal P}(p_1,p_2, \ldots p_f) \subseteq Paths$

For each possible choice of $f$ system nodes, $p_1,p_2, \ldots p_f$, the sender and the receiver compute a new graph $G(p_1,p_2, \ldots p_f)$ that is the result of removing $p_1,p_2, \ldots p_f$, from $G_{out}$, which is the graph defined by the discovered topology, $ConfirmedTopology$. Let ${\cal P}(p_1,p_2, \ldots p_f)$ be a set of $f+1$ vertex-disjoint paths in $G(p_1,p_2, \ldots p_f)$ (or the empty set when ${\cal P}(p_1,p_2, \ldots p_f)$ does not exists) and $Paths = \bigcup_{p_1,p_2, \ldots p_f} {\cal P}(p_1,p_2, \ldots p_f)$. We show polynomial message cost by showing that $|Paths|$ is polynomial. We also show that for at least one choice of $p_1,p_2, \ldots p_f$, has a corresponding set ${\cal P}(p_1,p_2, \ldots p_f)$ that contains $CorrectPaths$.

First we show that this procedure only sends messages through a polynomial number of paths. There are $\BigO(n^f)$ possible chooses of $f$ system nodes. Thus, $\BigO(n^f )$ path sets are computed, and since $f$ is a constant, this number is polynomial. Moreover, each such set contains at most $f + 1$ simple paths of linear length, because $p_i$ only computes sets, ${\cal P}(p_1,p_2, \ldots p_f)$, of size $f+1$. Thus, the sender and the receiver can exchange messages using a polynomial number of paths and message send operations.

We show that $CorrectPaths \subseteq Paths$. Consider the permutation choice, $p_1,p_2, \ldots p_f$, in which the set actually contains the set of Byzantine nodes in the system. Thus $G(p_1,p_2, \ldots p_f)$ contains only correct nodes. Furthermore, at least $f+1$ paths that were present in $G_{out}$ are still present in $G(p_1,p_2, \ldots p_f)$, since $G(p_1,p_2, \ldots p_f)$ was obtained from $G_{out}$ by the removal of $f$ (Byzantine) nodes, $p_1,p_2, \ldots p_f$. Hence, there are at least $f+1$ correct vertex-disjoint paths in $G(p_1,p_2, \ldots p_f)$, in ${\cal P}(p_1,p_2, \ldots p_f)$ and in $Paths$.

{\bf The case of no Byzantine neighbors} ~~~~ The procedure assumes that any Byzantine node has no directly connected Byzantine neighbor in the communication graph. Specifically, this polynomial cost solution considers the (extended) graph, $G_{ext}$, that includes all the edges in $confirmedTopology$ and {\em suspicious edges}. Given three nodes, $p_i, p_j, p_k \in P$, we say that node $p_i$ considers the undirected edge $(p_k, p_j)$ suspicious, if the edge appears as a directed edge in $ConfirmedTopology_i$ for only one direction, e.g., $(p_j,p_k)$.

%, see Definition~\ref{d:SuspiciousEdge}.

%
%\begin{definition}[Suspicious edges]
%%
%\label{d:SuspiciousEdge}
%%
%Given three nodes, $p_i, p_j, p_k \in P$, we say that node $p_i$ considers the undirected edge $(p_k, p_j)$ suspicious, if the edge appears as a directed edge in $ConfirmedTopology_i$ for only one direction, e.g., $(p_j,p_k)$.
%\end{definition}
%

The extended graph, $G_{ext}$, may contain fake edges that do not exists in the communication graph, but Byzantine nodes reports on their existence. Nevertheless, $G_{ext}$ includes all the correct paths of the communication graph, $G$. Therefore, the $2f+1$ vertex-disjoint paths that exists in $G$ also exists in $G_{ext}$ and they can facilitate a polynomial cost solution for the message exchange task, as we next show.

Let $G^{\prime}=(N, E_{G^{\prime}})$ be the graph computed from $ConfirmedTopology$ and its suspicious edges. %, see Definition~\ref{d:SuspiciousEdge}.
We demonstrate that $G^{\prime}$'s edges, $E_{G^{\prime}}$, contains the edges, $E_{G}$, of the communication graph, $G$.
Let us consider $e=(p_j, p_k) \in E_{G}$ and show that $e \in E_{G^{\prime}}$.
When both $p_j$ and $p_k$ are correct, the correctness of Algorithm~\ref{algo:discovery} implies $e \in E_{G^{\prime}}$. Suppose that $p_j$ is correct and $p_k$ is Byzantine, and consider the different cases in which $p_k$ decides to report (or not to report) about $e$ as part of its local neighborhood. Namely, either $e \in ConfirmedTopology$, or $e$ is a suspicious edge, because $p_i$ reports about $e$, and $p_k$ decides to report, and respectively, not to report. Since $G \subseteq G^{\prime}$, $G^{\prime}$ must contain $2f+1$ vertex-disjoint paths between any sender $p_s$ and receiver $p_r$, because  $G$ does.
Moreover, the same arguments implies that there may be at most $f$ incorrect paths, which contain each at least one Byzantine node. Hence, there are at least $f+1$ correct vertex-disjoint paths in $Paths$.

\Section{Correctness of Algorithm~\ref{algo:end2end}}
\label{s:end2endproof}

Definitions~\ref{d:MessageConfirmationRec},~\ref{d:MessageApprovedRec} and~\ref{d:ClearSender} are needed for Claim~\ref{l:EventuallyConfirmed}, Claim~\ref{l:EventuallyConfirmedreceiver} and Lemma~\ref{l:RfeAfell}.

\begin{definition}[Confirmation]
\label{d:MessageConfirmationRec}
Given configuration $c$, we say that message $m$ is {\em confirmed (by the receiver)} when
$m \in OutputMessageQueue$.
\end{definition}

\begin{definition}[Approve]
\label{d:MessageApprovedRec}
Given fair execution, $R$, of Algorithm~\ref{algo:end2end}, we say that message $m = \langle Source,$ $Destination,$ $VisitedPath,$ $IntentedPath,$ $ARQLabel,$ $DATA,$ $Payload \rangle$ is {\em being approved (by the sender $p_{Source}$)} during the first atomic step, $a_{sender}$, in which the sender executes line~\ref{ln:ApprovedTrue}, where $Source= sender$ $ARQLabel = m.ARQLabel$ and $Payload = m.Payload$, see line~\ref{ln:DataApprovedTrue}. Denote by $c_{approved}$ the configuration that immediately follows $a_{sender}$. Given configuration $c$ that appears after $c_{approved}$ in $R$, we say that message $m$ {\em is approved (by the sender)} in configuration $c$.
\end{definition}

\begin{definition}[Clear-sender-receiver]
\label{d:ClearSender}
Given configuration $c$, we say that the sender is {\em clear  (with respect to the receiver)}, if the queue $Confirmations[receiver] = \emptyset$  in $c$.
Moreover, the receiver is {\em clear (with respect to the sender)} , if the queue $ReceivedMessages[sender] = \emptyset$ in $c$.
\end{definition}

Claim~\ref{c:correctRelayingEnd2End} shows that a message that is relayed on a correct path is received at the destination within $\BigO(|N|)$ asynchronous rounds. Moreover, the destination receives the message with correct visiting set.
\begin{claim}
Let $R$ be a fair execution of Algorithm~\ref{algo:end2end} that starts in a safe configuration, $c$, with respect to Algorithm~\ref{algo:discovery}. Let $p_{source}, p_{dest} \in C$ be pair of correct nodes. Let $c_{send}$ be the configuration immediately following a step in which $p_{source}$ sends message $Msg$ on a correct path $Path = p_{source}, p_1, p_2, \ldots p_{dest}$  from source, $p_{source}$, to destination, $p_{dest}$. Within $\BigO(|N|)$ asynchronous rounds, $p_{dest}$ receives $Msg$ with a visiting set containing all nodes on $Path$ except $p_{dest}$.
\label{c:correctRelayingEnd2End}
\end{claim}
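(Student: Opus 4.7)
I would prove Claim~\ref{c:correctRelayingEnd2End} by induction on the hop index $k$ along $Path = p_{source}, p_1, \ldots, p_{dest}$, showing that within $\BigO(k)$ asynchronous rounds after $c_{send}$ the $k$-th node of $Path$ has received $Msg$ carrying the expected prefix of $Path$ in its $VisitedPath$ field. The base case is immediate from inspecting $ByzantineFaultTolerantSend$ (lines~\ref{ln:getDisjoint}--\ref{ln:send}): $p_{source}$ enqueues $Msg$ with $VisitedPath = \emptyset$ and $IntendedPath = Path$ onto the channel toward $first(Path) = p_1$. Since the FIFO queue holds at most $capacity$ prior items and $p_1 \in C$ is scheduled infinitely often, $p_1$ executes its $Receive$ handler on $Msg$ within one asynchronous round, the $capacity$ factor being absorbed into the asymptotic bound (as is the convention already used in Lemma~\ref{l:ClearChannel}).

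For the inductive step, I would assume $p_k$ has received $Msg$ from its $Path$-predecessor $p_{k-1}$ with the expected $VisitedPath$ prefix. Because $p_k \in C$ executes Algorithm~\ref{algo:end2end} faithfully and $p_k \neq p_{dest}$ for every intermediate $k$, the guard on line~\ref{ln:checkDest} holds; $p_k$ appends $p_{k-1}$ to $VisitedPath$ at line~\ref{ln:attach} and forwards $Msg$ on line~\ref{ln:notdestsend} to $next(IntendedPath) = p_{k+1}$. The FIFO-plus-fairness argument delivers $Msg$ to $p_{k+1}$ within one further asynchronous round. Iterating across at most $|Path| - 1 \le |N| - 1$ hops yields the $\BigO(|N|)$ bound, and when $p_{dest}$ finally performs its $Receive$, the $VisitedPath$ it observes contains exactly the nodes of $Path$ other than $p_{dest}$ itself, as claimed.

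The principal obstacles are the arbitrary starting configuration of the channels and the need to rule out any diversion of $Msg$ from $Path$. The first is addressed by the $capacity$ bound on in-transit messages: all stale items lying ahead of $Msg$ in any channel along $Path$ are flushed within $\BigO(capacity)$ asynchronous rounds by fair scheduling, so $Msg$ is always served in bounded time. The second is handled by the correctness of every $p_k$ on $Path$: no correct hop ever rewrites the $IntendedPath$ field, so the route fixed by $p_{source}$ propagates intact; and because no Byzantine node lies on $Path$, nothing ever tampers with the identifier that each hop honestly appends to $VisitedPath$. Together, these two observations yield the $\BigO(|N|)$ round bound and the correct final value of $VisitedPath$ at $p_{dest}$.
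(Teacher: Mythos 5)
Your proof is correct and takes essentially the same approach as the paper's: a hop-by-hop argument that each correct intermediate node on $Path$ passes the non-destination test, appends the predecessor to $VisitedPath$, and forwards along $IntendedPath$, yielding delivery within $|Path| = \BigO(|N|)$ asynchronous rounds. Your version is simply more explicit than the paper's (formal induction, the $capacity$/stale-message flushing, and the observation that no node on $Path$ tampers with $IntendedPath$ or $VisitedPath$), but the underlying argument is identical.
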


\begin{proof}
Upon the arrival of message $m$ to $p_k$ (line~\ref{ln:receive1}), node $p_i$ asserts that he is not the destination, $p_{dest}$, (line~\ref{ln:checkDest}). Immediately after, $p_{i}$ sends the message $m$ to the next neighbor, $p_{i+1}$, see line~\ref{ln:notdestsend}. Since the same argument holds when $p_j$ sends $m$ to the next node in $path$, we have that within $|Path|$ asynchronous rounds, $m$ is delivered to node $p_{dest}$.
\end{proof}

Claim~\ref{l:EventuallyConfirmed} says that when the sender repeatedly sends message $Msg$, for a duration of at least $\BigO(|N|)$ asynchronous rounds, the receiver eventually confirms message $Msg$.

\begin{claim}
Let $R$ be a fair execution of Algorithm~\ref{algo:end2end} that starts in a safe configuration, $c$, with respect to Algorithm~\ref{algo:discovery}. Let $p_s, p_r \in C$ be a pair of correct sending and receiving nodes. Suppose that, for a duration of at least $\BigO( capacity \cdot |N|)$ asynchronous rounds, $p_s$'s steps include only the execution of the function $ByzantineFaultTolerantSend(Msg)$ in the loop of line~\ref{ln:FaultSend}. Within that period, the system reaches configuration $c_{receive}$ in which $p_r$ confirms $Msg$.
\label{l:EventuallyConfirmed}
\end{claim}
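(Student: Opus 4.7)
The plan is to track, path-by-path, the contents of the receiver's per-path queues $ReceivedMessages[s][Path]$ and to show that within the allowed window the condition on line~\ref{ln:isEnough} fires on at least $f{+}1$ vertex-disjoint paths, forcing a call to $Confirm$ and hence to $DeliverMessage$. First I would fix, for the whole argument, the path set $Paths$ that $p_s$ obtains when it enters the loop of line~\ref{ln:FaultSend}: since $c$ is safe with respect to Algorithm~\ref{algo:discovery}, Theorem~\ref{l:convenance} says $ConfirmedTopology_s$ is legal, so the analysis of $getDisjointPaths()$ in Section~\ref{s:ipf} gives a subfamily $CorrectPaths \subseteq Paths$ of at least $f{+}1$ vertex-disjoint paths none of whose internal vertices is Byzantine.

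Next I would establish a \emph{flush} lemma. Each directed communication channel on a path in $CorrectPaths$ carries at most $capacity$ initial in-transit messages, and each such path has length at most $|N|$; because every step of $p_s$ during the window pushes a fresh copy of $Msg$ into the head of every channel of every path in $Paths$ and every channel is FIFO, within $\BigO(capacity \cdot |N|)$ asynchronous rounds all messages that were in transit in $c$ on the edges of a correct path have been drained past $p_r$. By Claim~\ref{c:correctRelayingEnd2End}, every subsequent injection of $Msg$ by $p_s$ reaches $p_r$ within $\BigO(|N|)$ rounds tagged with the true visited path $Path$, since $Path \in CorrectPaths$ contains only correct relays that faithfully append their identifiers on line~\ref{ln:attach}.

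After the flush completes, every arrival at $p_r$ along $Path \in CorrectPaths$ is a genuine $\langle s, r, Path, \ldots, m.Payload \rangle$ tuple with $m.ARQLabel$ equal to the sender's current label, so each such arrival inserts an identical entry at the head of $ReceivedMessages[s][Path]$. Since $p_s$ still performs $\Omega(capacity \cdot |N|)$ further iterations of $ByzantineFaultTolerantSend(Msg)$ inside the stated interval, after at most $capacity \cdot |N| + 1$ additional arrivals on each path in $CorrectPaths$ the top $capacity \cdot |N| + 1$ entries of $ReceivedMessages[s][Path]$ all equal $Msg$, hence $FloodedPath$ returns true on every $Path \in CorrectPaths$ at once. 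At that step the set $Paths$ computed inside the receive-handler contains $CorrectPaths$, so $NodeDisjointPaths(Paths)$ holds and line~\ref{ln:isEnough} triggers $Confirm(s, Msg.ARQLabel, Msg.Payload)$; this executes $DeliverMessage(s, Msg.Payload)$ on line~\ref{ln:deleiverMessage} (the label must differ from $CurrentLabel$ on the first such trigger, or we are already in a configuration where $Msg \in OutputMessageQueue$), and by Definition~\ref{d:MessageConfirmationRec} the receiver has confirmed $Msg$ in the resulting configuration $c_{receive}$.

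The main obstacle is the flushing step: one must rule out a false positive of $FloodedPath$ caused either by stale entries left in $ReceivedMessages[s][Path]$ from the initial configuration $c$ or by Byzantine-sourced tuples carrying a spoofed $VisitedPath$. The first threat is neutralized because $FloodedPath$ requires $capacity \cdot |N| + 1$ \emph{consecutive} head entries to match $Msg$, and after the flush these slots are filled exclusively by honest arrivals, so any pre-existing stale entries are either pushed strictly below that prefix or refer to a different payload/label and never yield a spurious confirm in the meantime. The second threat is harmless because even if a Byzantine node forges tuples with a path label $Path^\prime \notin CorrectPaths$, this cannot reduce the number of correct paths on which $FloodedPath$ eventually succeeds; the receiver needs only $f{+}1$ such paths, and $CorrectPaths$ supplies them by construction.
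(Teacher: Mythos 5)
Your proposal is correct, and it reaches the same destination as the paper's proof through the same underlying mechanism: the $f{+}1$ correct vertex-disjoint paths guaranteed by $getDisjointPaths()$, the FIFO channels of bounded $capacity$, and the repeated sends that eventually place $capacity \cdot n + 1$ consecutive copies of $Msg$ at the head of $ReceivedMessages[s][Path]$ for each correct path, triggering line~\ref{ln:isEnough}. The difference is in the logical packaging. The paper argues by contradiction: it assumes $Msg$ is never confirmed, infers from line~\ref{ln:clear} that the queues $ReceivedMessages[p_s][\ast]$ are never cleared after $c_{send}$, and concludes that the accumulated copies must eventually satisfy the condition of line~\ref{ln:isEnough} --- a contradiction. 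You argue directly, inserting an explicit \emph{flush} phase that drains the at most $capacity \cdot |N|$ stale in-transit messages per correct path before counting begins, and then you separately rule out spurious firings of $FloodedPath$ caused by stale or Byzantine-sourced entries. Your route is somewhat more careful on one point the paper elides: the paper's step ``$Msg$ not confirmed implies the queues were not cleared'' silently assumes that no \emph{other} message attributed to $p_s$ can be confirmed (and hence clear the queue via $Confirm$) during the window; your observation that after the flush the correct paths carry only $Msg$, so at most $f$ vertex-disjoint paths could ever be flooded with a different payload, is exactly the missing justification for that assumption. The paper's contradiction is shorter; your direct argument buys an explicit accounting of why convergence-era garbage and forged visited paths cannot derail the count.
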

\begin{proof}
Denote $c_{send}$ as the configuration immediately following the first step in which $p_s$ sends message $Msg$ in $R$, see line~\ref{ln:send}. Within $\BigO( capacity \cdot |N|)$ asynchronous rounds, the first frame containing $Msg$ arrives at $p_r$, see Claim~\ref{c:correctRelayingEnd2End}. Moreover, after another $\BigO( capacity \cdot |N|)$ asynchronous rounds, every correct path relays message $Msg$ at least  $\BigO( capacity \cdot |N|)$  times. This is correct since every asynchronous round, $p_s$ sends a new frame containing $Msg$ on each of the $2f+1$ vertex-disjoint paths. Moreover, by Claim~\ref{c:correctRelayingEnd2End}, the last frame sent on all $2f+1$ paths arrives after another  $\BigO( capacity \cdot |N|)$.

Assume, in the way of proof by contradiction, that $Msg$ is not confirmed by $p_r$. This implies that the queues, $ReceivedMessages[p_s][\ast]$, in $p_r$ containing messages sent from $p_s$ were not cleared at least since $c_{send}$, see line~\ref{ln:clear}. Thus, $p_r$ contains $capacity \cdot n +1$ indications of $Msg$ on $f+1$ vertex-disjoint paths. Denote $c_{last}$ as the configuration immediately after the arrival of the $(capacity \cdot n +1)$-th frame of the $f+1$'th path to relay $capacity \cdot n +1$ frames containing $Msg$. Immediately after $c_{last}$, $p_s$ must go through line~\ref{ln:isEnough}, because the conditions in line~\ref{ln:isEnough} hold. Thus, a contradiction and $Msg$ is confirmed within $\BigO( capacity \cdot |N|)$ asynchronous rounds.
\end{proof}

Claim~\ref{l:EventuallyConfirmedreceiver} says that when the receiver is sending acknowledgments about a message, that message eventually becomes approved.
We note that Claim~\ref{l:EventuallyConfirmedreceiver} considers acknowledgments sent from the receiver to the sender, rather than messages sent from the sender to the receiver, as in Claim~\ref{l:EventuallyConfirmed}. %[@[@Note - check claim references. In the PDF all references in this sentence refer to Claim C]@]@

\begin{claim}
Let $R$ be a fair execution of Algorithm~\ref{algo:end2end} that starts in a safe configuration, $c$, with respect to Algorithm~\ref{algo:discovery}. Let $p_s, p_r \in C$ be a pair of correct sending and receiving nodes. Suppose that, for a duration of at least $\BigO( capacity \cdot |N|)$ asynchronous rounds, $p_r$'s steps include only the execution of the function $ByzantineFaultTolerantSend(Ack)$ in the loop of line~\ref{ln:sendAcks}. That is, $p_r$ is sending acknowledgments on message $Msg$. Within that period, the system reaches configuration $c_{receive}$ in which $p_s$ approves $Msg$, see Definition~\ref{d:MessageApprovedRec}.
\label{l:EventuallyConfirmedreceiver}
\end{claim}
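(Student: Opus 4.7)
My plan is to mirror the proof of Claim~\ref{l:EventuallyConfirmed}, with the roles of sender and receiver swapped and the message flow reversed (acknowledgment frames from $p_r$ to $p_s$ instead of data frames from $p_s$ to $p_r$). The function $ByzantineFaultTolerantSend(Ack)$ invoked in the loop of line~\ref{ln:sendAcks} calls $getDisjointPaths$ and sends $Ack$ along each path in the returned set; by the assumption that the starting configuration is safe with respect to Algorithm~\ref{algo:discovery} (so $ConfirmedTopology$ is legal) and by the guarantees of Section~\ref{s:ipf}, at least $f+1$ of these vertex-disjoint paths are correct. Let $c_{send}$ be the configuration immediately following the first step at which $p_r$ sends $Ack$ in $R$.

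First I would apply Claim~\ref{c:correctRelayingEnd2End} along each of the $f+1$ correct paths: within $\BigO(capacity\cdot|N|)$ asynchronous rounds after $c_{send}$, the first frame of $Ack$ reaches $p_s$ on each such path, with a visited set equal to the interior of the path. Since $p_r$ keeps executing $ByzantineFaultTolerantSend(Ack)$ for the full $\BigO(capacity\cdot|N|)$ window, and each asynchronous round $p_r$ puts a new frame of $Ack$ on every selected path, a further $\BigO(capacity\cdot|N|)$ rounds suffice for each of the $f+1$ correct paths to deliver at least $capacity\cdot n+1$ consecutive frames carrying the same $\langle Payload, ARQLabel\rangle$ as $Ack$.

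I would then argue, as in Claim~\ref{l:EventuallyConfirmed}, by contradiction. Suppose $Msg$ is never approved in this window. Upon each arrival of an $ACK$-type frame with $label=msg.ARQLabel$, the sender executes the insert on line just above line~\ref{ln:DataApprovedTrue}, appending to $Confirmations[msg.Source][msg.VisitedPath]$; since the outer loop of line~\ref{ln:FaultSend} does not fetch a new message while $Approved=\mathbf{false}$, the label $label$ at $p_s$ is unchanged throughout the window, so the frames generated by the current $ACK$-sending burst of $p_r$ match and are enqueued. Consequently, the $f+1$ correct path queues each contain $capacity\cdot n+1$ copies of the appropriate $\langle Payload, ARQLabel\rangle$ tuple, so $FloodedPath$ returns $\mathbf{true}$ on each, and the set $Paths$ computed in line~\ref{ln:DataApprovedTrue} contains $f+1$ vertex-disjoint correct paths. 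Thus $NodeDisjointPaths(Paths)$ holds and line~\ref{ln:ApprovedTrue} sets $Approved\gets\mathbf{true}$, contradicting the assumption.

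The main obstacle to watch for is the label-matching condition in the line above~\ref{ln:DataApprovedTrue}: the claim's hypothesis fixes the receiver's acknowledgment label (it is the label $p_r$ confirmed in line~\ref{ln:AssignLabel}), but the sender's current $label$ variable is what controls insertion into $Confirmations$. The proof must therefore invoke the hypothesis that $p_s$ is stuck in the while-loop at line~\ref{ln:FaultSend} (so $label$ does not advance) together with the observation that the acknowledgments generated in this burst carry exactly $p_s$'s current label, which is ensured because $p_r$'s decision to enter the loop of line~\ref{ln:sendAcks} was itself triggered by evidence of $Msg$ under $p_s$'s current label. A minor secondary obstacle is to note that garbage in $Confirmations$ left over from the initial configuration or from Byzantine paths cannot preempt approval with a \emph{wrong} payload/label before the correct burst completes: since approval here is the event we are driving toward (not an event we are trying to rule out), any such earlier approval only strengthens the conclusion, and the bound $\BigO(capacity\cdot|N|)$ remains valid.
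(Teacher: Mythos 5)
Your route is essentially the paper's: define $c_{send}$ as the configuration after $p_r$'s first send of $Ack$, use Claim~\ref{c:correctRelayingEnd2End} to deliver the first frame to $p_s$ within $\BigO(capacity\cdot|N|)$ rounds, argue that another $\BigO(capacity\cdot|N|)$ rounds let each of the $f+1$ correct vertex-disjoint paths accumulate $capacity\cdot n+1$ frames of $Ack$, and close with a contradiction via the guard of line~\ref{ln:isEnoughConf}. That skeleton matches the paper's proof.

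There is, however, a genuine flaw in your handling of the ``secondary obstacle.'' You assert that a premature approval triggered by garbage in $Confirmations$ (initial-configuration residue or Byzantine-fed acknowledgments) ``only strengthens the conclusion.'' It does the opposite. The conclusion is approval of $Msg$ \emph{specifically}, i.e., an execution of line~\ref{ln:ApprovedTrue} with $Msg$'s label and payload (Definition~\ref{d:MessageApprovedRec}). If the sender spuriously approves some other tuple first, it exits the loop of line~\ref{ln:FaultSend}, clears the acknowledgment queues (line~\ref{ln:ClearAcks}), fetches the next message and advances $label$ (line~\ref{ln:changeLabel}); from then on the acknowledgments for $Msg$ fail the test $label = msg.ARQLabel$, are never inserted, and $Msg$ is never approved. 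This also undercuts your label-matching step, where you ``invoke the hypothesis that $p_s$ is stuck in the while-loop at line~\ref{ln:FaultSend}'': that is not a hypothesis of the claim (its hypothesis constrains only $p_r$'s steps), so the sender's immobility must be \emph{proved}, not assumed. The paper closes exactly this hole by observing that $Confirmations[p_r][\ast]$ is cleared only upon an approval, and that the only message approvable during the window is $Msg$: the receiver sends acknowledgments only for $Msg$, each path can carry at most $capacity\cdot n$ stale acknowledgments for anything else (short of the $capacity\cdot n+1$ threshold of $FloodedPath$), and Byzantine nodes control at most $f$ of the required $f+1$ vertex-disjoint paths. Your proof needs that argument in place of the ``strengthens the conclusion'' shortcut.
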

\begin{proof}
Denote $c_{send}$ as the configuration immediately following the first step in which $p_r$ sends acknowledgment $Ack$ in $R$, see line~\ref{ln:sendAcks}. Within $\BigO( capacity \cdot |N|)$ asynchronous rounds, the first frame containing $Ack$ arrives at $p_s$, see Claim~\ref{c:correctRelayingEnd2End}. Moreover, after another $\BigO( capacity \cdot |N|)$ asynchronous rounds, every correct path relays message $Ack$ at least  $\BigO( capacity \cdot |N|)$  times. This is correct since every asynchronous round, $p_r$ sends a new frame containing $Ack$ on each of the $2f+1$ vertex-disjoint paths. Moreover, by Claim~\ref{c:correctRelayingEnd2End}, the last frame sent on all $2f+1$ paths arrives after another  $\BigO( capacity \cdot |N|)$.

The queues, $Confirmations[p_r][\ast]$ are cleared only when a message sent to $p_r$ is approved, see line~\ref{ln:ClearAcks}. Since, $p_r$ is acknowledging the current message, $Msg$, by sending $Ack$, the only message that can be approved is $Msg$. This is true since each path, $Path$, may contain at most $capacity \cdot |N| $ acknowledgments for other messages in the path queues.

Assume, in the way of proof by contradiction, that $Msg$ is not approved by $p_s$. By the arguments above, $p_s$'s queues, $Confirmations_s[p_r][\ast]$, which contains $p_r$'s acknowledgments that $p_s$ received, were not cleared at least since $c_{send}$, see line~\ref{ln:ClearAcks}. Thus, $p_s$ contains $capacity \cdot n +1$ indications of $Ack$ on $f+1$ vertex-disjoint paths. Denote $c_{last}$ as the configuration immediately after the arrival of the $(capacity \cdot n +1)$-th frame of the $f+1$'th path to relay $capacity \cdot n +1$ frames containing $Ack$. Immediately after $c_{last}$, $p_s$ must go through line~\ref{ln:isEnoughConf}, because the conditions in line~\ref{ln:isEnoughConf} hold. Thus, a contradiction and $Msg$ is approved within $\BigO( capacity \cdot |N|)$ asynchronous rounds.
\end{proof}

Lemma~\ref{l:RfeAfell} shows that the senders repeatedly fetch messages.

\begin{lemma}
\label{l:RfeAfell}
Let $R$ be a fair execution of Algorithm~\ref{algo:end2end} that starts in a safe configuration, $c$, with respect to Algorithm~\ref{algo:discovery}. Let $p_s, p_r \in C$ be pair of correct sending and receiving nodes. Moreover, $c_{\ell}$ is the configuration that immediately follows the $\ell$-th time in $R$ in which $p_s$ fetches a message from the input queue. For every $\ell$, the system reaches $c_{\ell}$ within $\BigO(\ell \cdot |N|)$ asynchronous rounds.
\end{lemma}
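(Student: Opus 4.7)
The plan is to proceed by induction on $\ell$, leveraging Claim~\ref{l:EventuallyConfirmed} and Claim~\ref{l:EventuallyConfirmedreceiver} as the main tools. The overall strategy is to observe that between any two consecutive fetches, the sender is trapped inside the while-loop of line~\ref{ln:FaultSend}, doing nothing but repeatedly invoking $ByzantineFaultTolerantSend$ on the current message. As long as it remains trapped, Claim~\ref{l:EventuallyConfirmed} guarantees that the receiver confirms the current message within $\BigO(|N|)$ asynchronous rounds; once confirmed, the receiver enters the acknowledgment loop of line~\ref{ln:sendAcks}, and Claim~\ref{l:EventuallyConfirmedreceiver} then forces $Approved$ to become $true$ within another $\BigO(|N|)$ rounds, releasing the sender to perform its next fetch. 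If the sender happens to escape the loop earlier (for instance via stale evidence from the initial configuration) this only improves the bound.

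For the inductive step I would assume $c_\ell$ is reached within $\BigO(\ell\cdot|N|)$ rounds and argue as follows. Immediately after $c_\ell$, the sender has just executed the block of lines~\ref{ln:ClearAcks}--\ref{ln:fetch}: its acknowledgment queue $Confirmations$ for the destination has been cleared, its label has been freshly reassigned, and the variable $Message$ holds the newly fetched item $m_\ell$. The sender then enters the loop of line~\ref{ln:FaultSend}. Applying Claim~\ref{l:EventuallyConfirmed} to the suffix following $c_\ell$, within $\BigO(|N|)$ rounds the receiver executes $Confirm(\cdot,\cdot,\cdot)$ for $m_\ell$ at line~\ref{ln:confirmMessage}, clears $ReceivedMessages$ at line~\ref{ln:clear}, and enters the loop of line~\ref{ln:sendAcks}. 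Applying Claim~\ref{l:EventuallyConfirmedreceiver} to the resulting acknowledgment phase, within another $\BigO(|N|)$ rounds the sender sets $Approved \gets true$ at line~\ref{ln:ApprovedTrue}, exits line~\ref{ln:FaultSend}, and performs the $(\ell+1)$-th fetch, reaching $c_{\ell+1}$. The added cost is $\BigO(|N|)$, giving $\BigO((\ell+1)\cdot|N|)$ overall.

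The base case $\ell=1$ requires a little extra care because of the arbitrary starting configuration: the variables $Approved$, $label$, $Message$, and both the acknowledgment and receive queues may hold arbitrary stale data. The case split is the same however. Either $Approved$ already evaluates to $true$ in $c$ (possibly due to stale evidence) and the sender fetches right away, so $c_1$ is reached within $\BigO(1)$ rounds; or the sender stays in the loop of line~\ref{ln:FaultSend} sending whatever value $Message$ currently holds, in which case the same two-phase argument yields $c_1$ within $\BigO(|N|)$ rounds.

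The main technical obstacle is verifying that the hypotheses of the two cited claims genuinely hold, namely that the sender (respectively the receiver) performs $ByzantineFaultTolerantSend$ calls in the relevant loop for a sufficiently long uninterrupted stretch. On the sender side this is immediate from the structure of line~\ref{ln:FaultSend}. On the receiver side it is more delicate: once $Confirm$ is invoked, the receiver remains in the loop of line~\ref{ln:sendAcks} until $NewMessage$ becomes $true$, which can only happen at line~\ref{ln:newMessageTrue} inside the receive handler for a data message. Since the sender is committed to $m_\ell$ throughout the phase and keeps its label fixed, any re-trigger of the $Confirm$ branch concerns $m_\ell$ itself and simply re-enters the acknowledgment loop after setting $NewMessage$ back to $false$ at line~\ref{ln:newMessageFalse}, so the receiver's acknowledgment phase lasts long enough for Claim~\ref{l:EventuallyConfirmedreceiver} to be applicable.
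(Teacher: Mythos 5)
Your proposal follows essentially the same route as the paper: bound each inter-fetch interval by $\BigO(|N|)$ rounds, using Claim~\ref{l:EventuallyConfirmed} for the data phase and Claim~\ref{l:EventuallyConfirmedreceiver} for the acknowledgment phase, and observe that an early escape from the loop of line~\ref{ln:FaultSend} only helps. The one place where you assert what the paper actually proves is your final claim that ``any re-trigger of the $Confirm$ branch concerns $m_\ell$ itself.'' The sender keeping its label fixed is not by itself what rules out the receiver confirming some other message $Msg_{new}$: the danger is that Byzantine nodes plus stale in-transit frames assemble $f+1$ flooded vertex-disjoint paths for a forged message, which would set $NewMessage$ to ${\bf true}$ and terminate the acknowledgment loop prematurely. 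The paper closes this with a counting argument: after the receiver clears $ReceivedMessages$ in line~\ref{ln:clear}, at most $f$ of any $f+1$ vertex-disjoint flooded paths can contain a Byzantine node, so at least one is correct, and a correct path can carry at most $capacity\cdot n$ stale frames; hence at least one of the $capacity\cdot n+1$ required copies was genuinely sent by the sender after the clear, forcing $Msg_{new}$ to be a sender message --- in which case the sender has already performed its next fetch and the bound holds anyway. You should add that count to make the last paragraph airtight; otherwise the induction and the two-phase argument match the paper's proof.
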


\begin{proof}
By the code of Algorithm~\ref{algo:end2end}, on every iteration of the do forever loop (lines~\ref{ln:ClearAcks} to~\ref{ln:FaultSend}), a message is fetched in line~\ref{ln:fetch}. This do forever loop includes another loop in line~\ref{ln:FaultSend}. We prove the lemma by showing that the loop of line~\ref{ln:FaultSend} is completed within $\BigO( |N|)$ asynchronous rounds.

The proof considers the case in which the sender, $p_s$, does not wait in line~\ref{ln:FaultSend} for a long time before considering the case in which $p_s$ does wait. We show that for the latter case, the receiver, $p_r$, confirms $p_s$'s current message. After confirming the message, the receiver, $p_r$, begins sending acknowledgments to the sender, $p_s$. The proof shows that after the acknowledgments are sent, $p_s$ approves the message and fetches a new one. We show this by considering the case in which $p_r$ repeatedly sends acknowledgments for a sufficient amount of time, and a case in which it does not.

Suppose that $p_s$ does not wait in line~\ref{ln:FaultSend} more than $\BigO(capacity \cdot |N|)$ asynchronous rounds. In this case, $p_s$ starts the infinite loop again within $\BigO(capacity \cdot |N|)$ asynchronous rounds, and fetch a new message, see line~\ref{ln:fetch}. Thus, for the case in which $p_s$ does not wait in line~\ref{ln:FaultSend} more than $\BigO(capacity \cdot |N|)$ asynchronous rounds, the lemma is correct.

Suppose that $p_s$ is executing line~\ref{ln:FaultSend} and waits for acknowledgments on message $Msg$ for more than $\BigO(capacity \cdot |N|)$ asynchronous rounds. Thus, $p_s$ floods $2f+1$ vertex-disjoint paths with the message $Msg$, see Figure~\ref{fig:procedures}. Eventually, the receiver, $p_r$, receives message $Msg$ for $\BigO(capacity \cdot |N|)$ times on $f+1$ vertex-disjoint paths and confirms $Msg$, see Claim~\ref{l:EventuallyConfirmed}. After confirming it, the receiver sends acknowledgments on $2f+1$ vertex-disjoint paths until confirming a new message $Msg_{new}$. This is true because the condition in line~\ref{ln:sendAcks} holds only when a new message is confirmed, see line~\ref{ln:newMessageTrue}.

Let us consider the case in which, during $\BigO(capacity \cdot |N|)$ asynchronous rounds, message $Msg_{new}$ does not arrive to the receiver. By Claim~\ref{l:EventuallyConfirmedreceiver}, eventually the sender receives the acknowledgments for $capacity \cdot n + 1$ times on $f+1$ vertex-disjoint paths. Claim~\ref{l:EventuallyConfirmedreceiver} also says that the sender considers the message accepted by the receiver. In line~\ref{ln:ApprovedTrue}, the sender assigns $Approved = {\bf true}$. Thus, the condition in line~\ref{ln:FaultSend} holds and the sender fetches the next message, see line~\ref{ln:fetch}. Hence, the system reaches configuration $c_{fetch}$ that immediately follows a step in which the sender, $p_s$, fetches the next message. Thus, for the case in which, during $\BigO(capacity \cdot |N|)$ asynchronous rounds, message $Msg_{new}$ does not arrive to the receiver, the lemma is correct.

We continue by considering the case in which, during $\BigO(capacity \cdot |N|)$ asynchronous rounds, message $Msg_{new}$ does arrive to the receiver. Let $c_{conf}$ be the configuration that immediately follows the step in which $p_r$ confirms $Msg$. Since the receiver confirms $Msg$, we have that $p_r$ is clear (with respect to the sender) in configuration $c_{conf}$, see Definition ~\ref{d:ClearSender} and line~\ref{ln:clear}.

If $Msg_{new}$ was sent by the sender, it must have been fetched after $c$, and $c_{fetch}$ is reached when message $Msg_{new}$ is fetched.
It may be the case however, that $Msg_{new}$ was not sent by the sender.  Message $Msg_{new}$ was confirmed by $2f+1$ vertex-disjoint paths. Since there are at most $f$ Byzantine nodes, at least one of these paths, $Path$, must be correct. Moreover, in $c_{conf}$, the receiver is clear, thus the $capacity \cdot n + 1$ that $p_r$ counts in $ReceivedMessages[p_s][\ast]$ have all been received after configuration $c_{conf}$.
Note that the sender sends at least one of these messages, because at most $capacity \cdot n $ messages could be in the edges of $Path$ at any given configuration. Thus the sender sends $Msg_{new}$, which $p_s$ fetches immediately before $c_{fetch}$. I.e., the system reaches $c_{fetch}$.
\end{proof}

Theorem~\ref{l:convenance} says that, starting from the fourth (or even the third) message that the sender fetches, the receiver confirms the sender's messages. The proof of Theorem~\ref{l:convenance} is based on Lemma~\ref{l:ClearChannelend2endB}, which says that, in every sequence of four messages that the sender is fetching, the receiver confirms the fourth (or even the third) message.

\begin{lemma}
\label{l:ClearChannelend2endB}
Let $R$ be a fair execution of Algorithm~\ref{algo:end2end} that starts in a safe configuration, $c_{start}$, with respect to Algorithm~\ref{algo:discovery}. Let $c_{h}$ be a configuration that immediately follows the $h$-th step in which the sender fetches the $h$-th input queue message, $m_h$. Within $\BigO(|N|)$ asynchronous rounds, the receiver confirms message $m_4$.
\end{lemma}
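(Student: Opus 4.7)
The plan is to combine Lemma~\ref{l:RfeAfell}, which places the four \op{Fetch} events at configurations $c_1<c_2<c_3<c_4$ within $\BigO(|N|)$ asynchronous rounds of $c_{start}$, with a bookkeeping of labels and of clearing events at both endpoints. The key structural facts I will exploit are: (i)~the labels satisfy $L_{k+1}\equiv L_k+1\pmod 3$, so $L_1,L_2,L_3$ are pairwise distinct while $L_4=L_1$; (ii)~the sender empties $Confirmations$ at each fetch (line~\ref{ln:ClearAcks}), the receiver empties $ReceivedMessages$ at every \op{Confirm} call (line~\ref{ln:clear}), and $CurrentLabel$ is overwritten to the label used in that \op{Confirm}; and (iii)~at most $capacity\cdot n$ messages per path can survive stale-in-transit across any such clear, while Byzantine nodes control at most $f$ of the $2f+1$ vertex-disjoint paths between sender and receiver.

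\textbf{Genuine Confirm with the right payload in each window.}
For each $k\in\{1,2,3,4\}$ I would show that the receiver performs $\op{Confirm}(L_k,m_k)$ at some time $t_k\in(c_k,c_{k+1})$. Since $Confirmations$ is emptied at $c_k$, approving $m_k$ (the exit condition of the while-loop on line~\ref{ln:FaultSend}) requires $capacity\cdot n+1$ acknowledgments with label $L_k$ on $f+1$ vertex-disjoint paths, all inserted after $c_k$; at least one such path is correct, so the stale-in-transit bound forces at least one acknowledgment to be freshly produced by the receiver, which can only occur inside \op{Confirm}. The symmetric forward-direction argument identifies the payload: on a correct path, any $capacity\cdot n+1$ flood of $(x,L_k)$ accumulated since the receiver's last clear must contain at least one truly fresh sender transmission, and during the window in which the sender's current label equals $L_k$ the only such transmission is $(m_k,L_k)$; hence $x=m_k$. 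Existence of $t_4$ within $\BigO(|N|)$ additional rounds is then delivered by Claims~\ref{l:EventuallyConfirmed} and~\ref{l:EventuallyConfirmedreceiver}.

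\textbf{Delivery of $m_4$ and the main obstacle.}
At $t_3$ the receiver sets $CurrentLabel:=L_3$. Applying the payload-restriction argument of the previous paragraph to the interval $(t_3,t_4)$, any intermediate \op{Confirm} during that interval must use either $(L_3,m_3)$ (a harmless reconfirmation that leaves $CurrentLabel=L_3$) or $(L_4,m_4)$, because on correct paths the only post-$t_3$ sender transmissions carry label $L_3$ (until $c_4$) or $L_4$ (after $c_4$), and the combined stale and Byzantine budgets are insufficient to flood $f+1$ vertex-disjoint paths with any other label--payload pair. Consequently, just before $t_4$ we have $CurrentLabel=L_3\neq L_1=L_4$; the test on line~\ref{ln:deleiverMessage} succeeds and $m_4$ is inserted into $OutputMessageQueue$, so by Definition~\ref{d:MessageConfirmationRec} the receiver confirms $m_4$ within $\BigO(|N|)$ rounds. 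The hard part is exactly this payload-restriction step: ruling out that a Byzantine-induced or stale-in-transit \op{Confirm} with label $L_4$ could pre-empt the genuine $(m_4,L_4)$ confirmation and prematurely overwrite $CurrentLabel$ to $L_4$. Combining the $capacity\cdot n$ per-correct-path stale bound, the $f$-out-of-$2f+1$ Byzantine-path budget, and the mod-$3$ cycling of labels so that $L_3$ buffers between the two occurrences of $L_1=L_4$ is also what forces four fetches rather than two or three.
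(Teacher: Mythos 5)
Your proposal follows essentially the same route as the paper's proof: establish that the sender's clear at each fetch forces a genuine receiver-side \op{Confirm} (hence a receiver clear and a $CurrentLabel$ update) in the later windows, then use the mod-$3$ label cycling together with the $capacity\cdot n$ stale-in-transit bound and the $f$-of-$2f{+}1$ Byzantine path budget to conclude that the first post-$t_3$ confirmation with label $L_4$ must carry payload $m_4$ and is delivered because $CurrentLabel=L_3\neq L_4$ --- this is exactly the paper's chain of ``sender clear at $c_2$'', ``receiver clear between $c_3$ and $c_4$ with $CurrentLabel=L_3$'', and ``the next confirmable flood has label $L_4$ and must originate at the sender''. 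One caveat: your claim that $\op{Confirm}(L_1,m_1)$ occurs in the first window is not provable as stated (from the arbitrary initial queue contents the receiver may confirm a garbage payload under label $L_1$, or may not deliver $m_1$ at all because $CurrentLabel$ already happens to equal $L_1$), but since your conclusion only uses the windows $k=3,4$, where the earlier clears make the payload-restriction argument sound, this overclaim does not damage the argument.
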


%\noindent {\bf Lemma~\ref{l:ClearChannelend2endB}}
%{\em
%}

\begin{proof}

\begin{claim}
\label{l:Bcstc}
In $c_2$, the sender is clear (with respect to the receiver), see Definition~\ref{d:ClearSender}.
\end{claim}

\begin{proofClaim}
By definition, $c_2$ immediately follows atomic step $a_2$, in which, after clearing the confirmation queue in line~\ref{ln:ClearAcks}, the sender fetches message $m_2$ and sends it.
\end{proofClaim}

\begin{claim}
\label{l:sthdlofp}
Between the configurations $c_{3}$ and $c_4$, there is a configuration $c_{receiver-clear}$ in which the receiver is clear (with respect to the sender).
\end{claim}

\begin{proofClaim}
Suppose, without the loss of generality, that immediately after $c_{sender-clear}$, the sender is waiting for a message with label $1$. By lemma~\ref{l:RfeAfell}, the sender eventually fetches the next message. The sender can only fetch a new message once $Approved$ is true, see line~\ref{ln:FaultSend}. Moreover, $Approved$ is only set to $true$ once the queue $Confirmations[receiver][\ast]$ contains $2f+1$ flooded paths, see line~\ref{ln:isEnoughConf}. Thus, the sender counts $2f+1$ vertex-disjoint paths that relayed acknowledgments with label $1$. Moreover, the sender is clear in $c_{sender-clear}$. Hence, configuration $c_{sender-clear}$ contains no message in $Confirmations[receiver][\ast]$. Starting from $c_{sender-clear}$, the sender receives $capacity \cdot n + 1$ acknowledgments on $2f+1$ vertex-disjoint paths for the current message with label $1$.
Note that at least one of these $2f+1$ paths, $Path$, is correct, because there are $f$ Byzantine. Since $|Path| \leq n$ and each edge on $Path$ may contain at most $capacity$ messages, we have that at least one of the acknowledgments that includes $Path$ as its visiting path, is sent by the receiver between $c_{sender-clear}$ and configuration $c_{receiver-send} \in R$. We show that $c_{receiver-send} = c_{receiver-clear}$.

This means that after $c_{sender-clear}$, the sender clears the confirmations queue, $Confirmations[receiver][\ast]$, and fetches the next message, assigning it the label $2$, see lines~\ref{ln:ClearAcks} through line~\ref{ln:FaultSend}. By similar arguments, we know that the receiver sends at least one acknowledgment with label $2$.

To conclude, there is a configuration $c \in R$ in which the receiver is sending acknowledgments with label $1$, and then a configuration $c^{\prime}$ in which the receiver sends acknowledgments with label $2$. Moreover, between two consecutive executions of line~\ref{ln:sendAcks}, the receiver has to go through line~\ref{ln:clear}. Thus, the receiver cleared it's message queues, $Confirmations[sender][\ast]$, immediately before configuration $c_{receiver-clear}$ and $c_{receiver-send} = c_{receiver-clear}$.
\end{proofClaim}

Let us consider configuration $c_{receiver-clear}$ from the end of proof of Claim~\ref{l:sthdlofp}.

The next message to be sent after $c_{receiver-clear}$, is $m_4$, the message fetched in $c_4$, with label $0$. Between $c_{receiver-clear}$ and $c_4$, all messages sent by the sender have the label $2$. By arguments stated above, the message, $m$, that is the next message to be confirmed after $c_{receiver-clear}$, must have been sent by the sender at least once since $c_{receiver-clear}$. The sender, sends only messages with label $0$ and $2$. Moreover, the last message to be confirmed had a label $2$. Thus, $CurrentLabel = 2$, see line~\ref{ln:AssignLabel}. Any sent message with label $2$ is not inserted to the confirmations queue, $Confirmations[sender][\ast]$ between $c_{receiver-clear}$ and the configuration that immediately follows the next sender's fetch, see line~\ref{ln:deleiverMessage}. Thus, by line~\ref{ln:changeLabel}, the next message to be confirmed is a message with label $0$, which must be $m_4$.
\end{proof}

\noindent {\bf Theorem~\ref{l:convenance} (Self-stabilization)}
{\em Let $R$ be a fair execution of Algorithm~\ref{algo:end2end} that starts in an arbitrary configuration. Within $\BigO(|N|)$ asynchronous rounds, the system reaches a safe configuration $c$ after which:
(1) for every step $a^{m}_{s}$ where the sender sends $m$ there is a corresponding step $a^m_{r} \in R$ where the receiver confirms message $m$, and
(2) for every step $a^m_{r}$, there is a corresponding step, $a^m_{s} \in R$, that occurs before $a^m_{r}$ and in which the sender sends $m$.
}

\begin{proof}
Let $c$ be the configuration that Claim~\ref{l:sthdlofp} denote as $c_4$, which the system reaches within $\BigO(|N|)$ asynchronous rounds, see Lemma~\ref{l:RfeAfell}. Let $m_i$ be the $i$-th message fetched.

Suppose that $i \geq 4$.  Lemma~\ref{l:ClearChannelend2endB} considers the four consecutive messages $m_{i-3}, \ldots m_i$ and says that the receiver confirms message $m_i$. Thus, condition (1) holds.

Condition (2) follows from arguments similar to the ones used in the proof of Claim~\ref{l:EventuallyConfirmed}. Namely, for the case of $i \geq 5$, message $m_{i-1}$ is confirmed, see lemma~\ref{l:ClearChannelend2endB}. Immediately after the receiver confirms $m_{i-1}$, it clears the queue $ReceivedMessages[sender][\ast]$, see lines~\ref{ln:deleiverMessage} to~\ref{ln:clear}. Thus, there exists a configuration $c_{receiver-clear}$ in which the receiver is clear (with respect to the sender) before $c_i$, see Definition~\ref{d:ClearSender}. Moreover, a message is confirmed only if the queue $ReceivedMessages[sender][\ast]$ contains $2f+1$ flooded paths, see line~\ref{ln:isEnough}. These flooded paths implies that in configuration $c_i$, the queue $ReceivedMessages[sender][\ast]$ contains $capacity \cdot n + 1$ indications of $m_i$ on $2f+1$ node disjoint paths. Thus, $m_i$ is confirmed only after a period that follows $c_{receiver-clear}$ and includes its reception at least $capacity \cdot n +1 $ times on each of the $2f+1$ vertex-disjoint paths.

Recall that we assume that there are at most $f$ Byzantine nodes in the system. At least one path, $Path$, of the above $2f+1$ paths is correct.  Moreover, $|Path| \leq n$ and each edge on $Path$ may contain at most $capacity$ messages. Thus, at least one of the $capacity \cdot n +1 $ message that were relayed on the correct path $Path$ was sent by the sender. This completes the correctness proof.
\end{proof}

\end{document}